\DeclareMathOperator*{\nn}{\nonumber}
\newcommand{\RNum}[1]{\uppercase\expandafter{\romannumeral #1\relax}}
\newtheorem{theorem}{Theorem}
\theoremstyle{definition}
\newtheorem{remark}{Remark}
\def\blfootnote{\gdef\@thefnmark{}\@footnotetext}
\def\cY{{\mathcal Y}}
\def\cS{{\mathcal S}}
\def\cQ{{\mathcal Q}}
\newcommand{\pr}[1]{\left(#1\right)}
\title{Capacity of Finite-State Channels with Delayed Feedback}
\author{Bashar~Huleihel,~\IEEEmembership{Student~Member,~IEEE,}
        Oron~Sabag,~\IEEEmembership{Member,~IEEE,}
        Haim~H.~Permuter,~\IEEEmembership{Senior~Member,~IEEE,}
        and~Victoria~Kostina,~\IEEEmembership{Senior~Member,~IEEE}}
\begin{document}
\maketitle

\begin{abstract}
In this paper, we investigate the capacity of finite-state channels (FSCs) in the presence of delayed feedback. We show that the capacity of a FSC with delayed feedback can be computed as that of a new FSC with instantaneous feedback and an extended state. Consequently, graph-based methods to obtain computable upper and lower bounds on the delayed feedback capacity of unifilar FSCs are proposed. Based on these methods, we establish that the capacity of the trapdoor channel with delayed feedback of two time instances is given by $\log_2\left(\frac{3}{2}\right)$. In addition, we derive an analytical upper bound on the delayed feedback capacity of the binary symmetric channel with a no consecutive ones input constraint. This bound also serves as a novel upper bound on its non-feedback capacity, which outperforms all previously known bounds. Lastly, we demonstrate that feedback does improve the capacity of the dicode erasure channel.
\end{abstract}

\blfootnote{This work was supported in part by the German Research Foundation (DFG) via the German Israeli Project Cooperation (DIP), the Israel Science Foundation (ISF) under Grant 899/21, the NSF-BSF grant, and the Israeli Innovation Authority. The work of O. Sabag was supported in part by the Israel Science Foundation (ISF) under Grant 1096/23. The work of V. Kostina was supported in part by National Science Foundation (NSF), under Grant CCF-1751356 and Grant CCF1956386. This paper was presented in part at the 2022 IEEE International Symposium on Information Theory \cite{Huleihel_trapdoor}.

B. Huleihel and H. H. Permuter are with the Department
of Electrical and Computer Engineering, Ben-Gurion University of the
Negev, Beer-Sheva 84105, Israel (e-mail: basharh@post.bgu.ac.il; haimp@post.bgu.ac.il).

O. Sabag is with the Rachel and Selim Benin School of Computer Science and Engineering, Hebrew University of Jerusalem, Jerusalem 91904, Israel (e-mail:  oron.sabag@mail.huji.ac.il).

V. Kostina is with the Department of Electrical Engineering, California Institute of Technology, Pasadena, CA 91125 USA (vkostina@caltech.edu).}

\begin{IEEEkeywords}
Channel capacity, directed information, dual capacity upper bound, finite state channels (FSCs).
\end{IEEEkeywords}

\section{Introduction}\label{sec:intro}
A finite-state channel (FSC) is a widely used statistical model for a channel with memory \cite{McMillan1953TheBT,Shannon_FSC,Blackwell58}. The memory of this channel is represented by an underlying channel state that takes values from a finite set. This model has been used in many practical applications, including wireless communication \cite{FSC_Wirless1,FSC_Wirless2,Wang95_FSC_usful_for_radiochannels}, molecular communication \cite{MolecFSCTransComm,MolecularSurvey}, and magnetic recording \cite{FSC_Magnetic}. An example of its versatility is the ability to model a memoryless channel with an input constraint by introducing a dummy sink state whose capacity is zero in case the constraint is violated. Generally speaking, the capacity formula of a FSC, whether or not feedback is allowed, is given by a multi-letter expression which is hard to evaluate. The main focus of this paper is on an important class of FSCs, known as unifilar FSCs. For these channels, the new channel state is determined by a time-invariant function of the previous channel state, the current channel input, and the current channel output.

The capacity of a unifilar FSC with instantaneous feedback has been broadly investigated in the literature, while instantaneous feedback refers to the case where at time $t$, the encoder has access to the channel outputs up to time $t-1$. This has resulted in several powerful methodologies that have been employed to derive simple capacity expressions and optimal coding schemes for well-known instances of unifilar FSCs with feedback \cite{Chen05,PermuterCuffVanRoyWeissman08,Ising_channel,Sabag_BEC,trapdoor_generalized,Ising_artyom_IT,Shemuel_NOST,PeledSabag,Sabag_BIBO,AharoniSabagRL}.
We mention here three essential works that will be utilized in the current paper: in \cite{Sabag_UB_IT,OronBasharfeedback}, for a given \textit{$Q$-graph}\footnote{The $Q$-graph is an auxiliary directed graph that is used to map channel output sequences onto one of the graph nodes (e.g. see Fig. \ref{fig:1Markov}).}, single-letter upper and lower bounds on the feedback capacity of unifilar FSCs were introduced, as well as a methodology to evaluate the bounds; in \cite{Sabag_DB_FB}, an alternative methodology to derive computable capacity upper bounds was proposed. In particular, in \cite{Sabag_DB_FB} it was shown that the dual capacity upper bound can be formulated as a simple Markov decision process (MDP) with the MDP states, actions, and disturbances taking values within finite sets. The main advantage of this methodology compared to the single-letter $Q$-graph upper bound lies in the simplicity of deriving analytical upper bounds. The bounds, however, depend on the choice of a test distribution over the channel outputs ensemble. Therefore, in order for these bounds to be meaningful, the test distribution must be chosen carefully.

\begin{figure}[t]
\centering
    \psfrag{E}[][][1]{Encoder}
    \psfrag{D}[][][1]{Decoder}
    \psfrag{C}[][][0.85]{$P(s_t,y_t|x_t,s_{t-1})$}
    \psfrag{V}[][][.9]{Delay $d$}
    \psfrag{M}[][][1]{$m$}
    \psfrag{Y}[][][1]{$y_t$}
    \psfrag{O}[][][1]{$\hat{m}$}
    \psfrag{Z}[][][1]{$y_{t-d}$}
    \psfrag{X}[][][1]{$x_t$}
    \includegraphics[scale = 0.4]{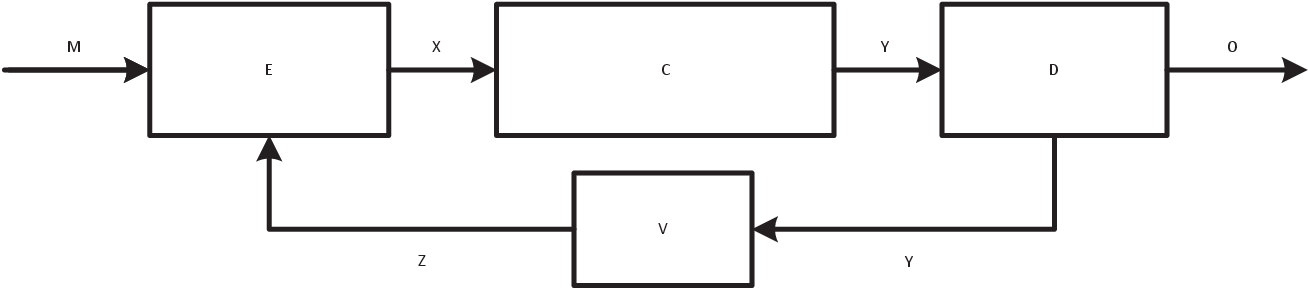}
    \caption{Finite-state channel with delayed feedback of $d$ time instances.}
    \label{fig:FSC}
\end{figure}
In this paper, we investigate the delayed feedback capacity of FSCs. This means that, in the case of a delay of $d$ time instances, the encoder has access to the channel outputs up to time $t-d$, in contrast to the standard feedback definition of access up to time $t-1$ (see Fig. \ref{fig:FSC}). We demonstrate that the capacity of a general FSC with delayed feedback can be computed as the capacity of a new reformulated FSC with instantaneous feedback. Specifically, we define the new channel state as $\tilde{S}_{t-1}=\left(S_{t-d},X_{t-d+1}^{t-1}\right)$, the new channel output as $\tilde{Y}_t=Y_{t-d+1}$, and leave the channel input unchanged, i.e., $\tilde{X}_t=X_t$. We prove that this new channel is an FSC and that its capacity with instantaneous feedback is equal to the capacity of the original FSC with delayed feedback of $d$ time instances. In addition, we show that if the original channel is a unifilar FSC, then the reformulated FSC is a unifilar FSC as well. As a result, for any unifilar FSC the methodologies from \cite{Sabag_UB_IT,OronBasharfeedback,Sabag_DB_FB} can also be applied when the feedback is delayed, by redefining the FSC. It is a known fact that feedback can only increase the feedforward capacity (i.e. the capacity without feedback). Accordingly, while studying the delayed feedback capacity is an important task in itself, it is also beneficial for deriving upper bounds on the feedforward capacity.

A major contribution of this paper is our investigation of the delayed feedback capacity of several well-known FSCs. We provide novel results regarding both their delayed feedback capacity and their feedforward capacity. Specifically, our first main result concerns the capacity of the trapdoor channel \cite{Blackwell_trapdoor}. Despite the extensive research efforts dedicated to the trapdoor channel, e.g. \cite{Ahl_kaspi87,Ahlswede98,PermuterCuffVanRoyWeissman08,kobayashi20033,Trapdoor_Lutz,Huleihel_Sabag_DB}, its feedforward capacity has remained an open problem for over sixty years.  In \cite{PermuterCuffVanRoyWeissman08}, it was shown that the feedback capacity of the trapdoor channel is equal to $\mathrm{C}^{\mathrm{fb}}_\mathrm{1} = \log_2\left(\frac{1+\sqrt{5}}{2}\right)\approx 0.6942$. 
In this paper, we consider the trapdoor channel with delayed feedback of two time instances. We show that the capacity in this scenario is equal to $\mathrm{C}^{\mathrm{fb}}_\mathrm{2}=\log_2\left(\frac{3}{2}\right)\approx 0.5850$, which is much closer to the lower bound on the feedforward capacity of $0.572$ \cite{kobayashi20033}. Further, by investigating a greater delay of the feedback, we provide a new upper bound on its feedforward capacity, which is approximately equal to $0.5765$. Next, we study the capacity of the binary symmetric channel (BSC) in the case where the input sequence is not allowed to contain two consecutive ones. For this setting, we derive an analytical upper bound on its capacity with delayed feedback of time instances that also serves as a novel upper bound on the feedforward capacity. It is interesting that for both the trapdoor channel and the input constrained BSC the capacity significantly degrades when only the previous channel output is not provided to the encoder. Finally, we demonstrate that the feedback capacity of the dicode erasure channel (DEC) \cite{PfitserLDPC_memory_erasure, henry_dissertation} is not equal to its feedforward capacity, by providing a new upper bound on its feedforward capacity that lies slightly below the feedback capacity. 

The remainder of this paper is organized as follows. Section \ref{sec:prelimi} introduces the notation and the model definition. Section \ref{sec: methodologies} introduces computable upper and lower bounds on the capacity of unifilar FSCs with instantaneous feedback. Section \ref{seq:main_results} states the main results of this paper. Section \ref{sec: delayed_feedback} presents our demonstration of the fact that the delayed feedback capacity problem can be reduced into an instantaneous feedback capacity problem, by appropriately reformulating the channel. Subsequently, we also introduce computable upper and lower bounds on the delayed feedback capacity of unifilar FSCs. Section \ref{sec: trapdoor} presents our main results regarding the capacity of the trapdoor channel. Section \ref{sec: upper_bounds} provides novel results concerning the feedforward capacities of the input-constrained BSC and the DEC, by investigating their delayed feedback capacity. Finally, our conclusions appear in Section \ref{sec:conclusion}. To preserve the flow of the presentation, some of the proofs are given in the appendices.

\section{Notation and Preliminaries}\label{sec:prelimi}
In this section, we introduce the notation, the model definition, and our MDP framework.
\subsection{Notation}
Throughout this paper, random variables will be denoted by capital letters and their realizations will be denoted by lower-case letters, e.g. $X$ and $x$, respectively. Calligraphic letters denote sets, e.g.  $\mathcal{X}$. We use the notation $X^n$ to denote the random vector $(X_1,X_2,\dots,X_n)$ and $x^n$ to denote the realization of such a random vector. We also use the notation $X_i^j$ ($j>i$) to designate the random vector $(X_i,\dots,X_j)$. For a real number $\alpha\in [0,1]$, we define $\bar{\alpha}=1-\alpha$.
The probability $\Pr[X=x]$ is denoted by $P_X(x)$. When the random variable is clear from the context, we write it in shorthand as $P(x)$.
The directed information between $X^n$ and $Y^n$ is defined as
\begin{align*}
    I(X^n\rightarrow Y^
    n) = \sum_{i=1}^n I(X^i;Y_i|Y^{i-1}).
\end{align*}
The probability mass function of $X^n$ \textit{causally conditioned} on $Y^{n-d}$ is defined as
\begin{align*}
    P(x^n\|y^{n-d}) = \prod_{i=1}^n P(x_i|x^{i-1},y^{i-d}).
\end{align*}
We denote by $\mathrm{C}$, $\mathrm{C}^\mathrm{fb}_1$, and $\mathrm{C}^\mathrm{fb}_\mathrm{d}$, the feedforward capacity (i.e. no feedback), the feedback capacity (the capacity with instantaneous feedback, i.e. $d=1$ in Fig. \ref{fig:FSC}), and the $d$ time instances delayed feedback capacity, respectively. 

\subsection{Finite-state Channels} 
A FSC is defined statistically by a time-invariant transition probability kernel, $P_{S^+,Y|X,S}$, where $X$, $Y$, $S$, $S^+$ denote the channel input, output, and state before and after one transmission, respectively. The cardinalities $\mathcal{X},\mathcal{Y},\mathcal{S}$ are assumed to be finite. Formally, given a message $m$, the channel has the following property:
\begin{align}\label{eq: channel_markov}
    P(s_t,y_t|x^t,y^{t-1},s^{t-1},m) = P_{S^+,Y|X,S}(s_t,y_t|x_t,s_{t-1}).
\end{align}
A unifilar FSC has the additional property that the state evolution is given by a time-invariant function, $f(\cdot)$, such that $s_t = f(s_{t-1},x_t,y_t)$.

As shown in the theorem below, the feedback capacity of a strongly connected\footnote{A FSC is strongly connected if for any states $s,s^{\prime}\in\mathcal{S}$, there exit an integer $T$ and an input distribution $\{P_{X_t|S_{t-1}}\}_{t=1}^{T}$ such that $\sum_{t=1}^T P_{S_t|S_0}(s|s^{\prime})>0$.} FSC is given by a multi-letter expression that cannot be computed directly. 
\begin{theorem}[\!\cite{PermuterCuffVanRoyWeissman08}, Th. 3] \label{FSC_feedback_Capacity}
 The feedback capacity of a strongly connected FSC is
\begin{align}
	\mathrm{C}^\mathrm{fb}_1 = \lim_{n\to\infty}\frac{1}{n}\max_{P(x^n\|y^{n-1})}I(X^n\rightarrow Y^n).
\end{align}
\end{theorem}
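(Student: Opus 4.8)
The plan is to prove two matching inequalities---a converse (the rate cannot exceed the limiting normalized directed information) and an achievability result (this limit is attainable)---and separately to verify that the limit defining the right-hand side exists and is independent of the initial channel state. I work throughout with the joint law induced by a feedback code of blocklength $n$ and rate $R$, under which the input $X_t$ is a (possibly randomized) function of the message $M$ and the past outputs $Y^{t-1}$, while $(S_t,Y_t)$ evolves according to the kernel in \eqref{eq: channel_markov}.

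For the converse, I would begin with Fano's inequality to obtain $nR \le I(M;Y^n) + n\epsilon_n$ with $\epsilon_n \to 0$ as the error probability vanishes, and then show that $I(M;Y^n)=I(X^n\to Y^n)$ under any feedback code. Expanding $I(M;Y^n)=\sum_{t=1}^{n} I(M;Y_t\mid Y^{t-1})$ and introducing $X^t$ through the chain rule, the feedback constraint makes $X^t$ a deterministic function of $(M,Y^{t-1})$, so $I(X^t;Y_t\mid M,Y^{t-1})=0$ and hence $I(M;Y_t\mid Y^{t-1})=I(X^t;Y_t\mid Y^{t-1})+I(M;Y_t\mid X^t,Y^{t-1})$. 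The remaining term is where the state enters: since the joint law factorizes as the product of the channel kernels $P_{S^+,Y|X,S}(s_i,y_i\mid x_i,s_{i-1})$ times the encoding kernel, the posterior $P(s^{t-1}\mid x^{t-1},y^{t-1},m)$ depends on $m$ only through $(x^{t-1},y^{t-1})$, which yields the Markov chain $M\to(X^t,Y^{t-1})\to Y_t$ and forces $I(M;Y_t\mid X^t,Y^{t-1})=0$. Summing over $t$ gives $I(M;Y^n)=I(X^n\to Y^n)\le \max_{P(x^n\|y^{n-1})}I(X^n\to Y^n)$, since the code induces a valid causally conditioned input distribution; dividing by $n$ and letting $n\to\infty$ finishes the converse, modulo existence of the limit.

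For achievability I would use a random-coding argument generating codewords according to the maximizing causally conditioned distribution $P(x^n\|y^{n-1})$ and decoding with a rule tailored to the directed-information density (an information-spectrum/typicality decoder). The new ingredient relative to the memoryless setting is the hidden state: strong connectivity guarantees that, from any unknown initial state, the chain can be steered into a prescribed condition within a number of symbols that does not grow with $n$, so the per-symbol rate lost to state synchronization is $o(1)$. The same mechanism shows that the achievable rate is insensitive to the initial state.

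The main obstacle is precisely the existence of the limit and its independence of $s_0$. Writing $C_n(s_0)=\max_{P(x^n\|y^{n-1})}I(X^n\to Y^n)$ for a fixed initial state, I would establish that the worst-case sequence $\underline{C}_n=\min_{s_0}C_n(s_0)$ and the best-case sequence $\overline{C}_n=\max_{s_0}C_n(s_0)$ are, respectively, approximately super- and subadditive, so that Fekete's lemma yields convergence of each normalized version. Strong connectivity is then used to bound $\overline{C}_n-\underline{C}_n$ by a constant independent of $n$---prepending a short state-steering phase lets one emulate any initial state from any other at vanishing normalized cost---forcing $\tfrac{1}{n}\underline{C}_n$ and $\tfrac{1}{n}\overline{C}_n$ to share a common limit. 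That common value is the capacity in the statement, and carefully reconciling these fixed-initial-state quantities with the operational (worst-case over $s_0$) definition is the most delicate bookkeeping in the argument.
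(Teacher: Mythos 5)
This theorem is imported verbatim from \cite{PermuterCuffVanRoyWeissman08} (Theorem 3 there) and the paper contains no proof of it, so there is nothing in the paper to compare against beyond the citation. Your outline --- Fano plus the identity $I(M;Y^n)=I(X^n\rightarrow Y^n)$ for deterministic feedback encoders via the Markov chain $M-(X^t,Y^{t-1})-Y_t$, random coding over causally conditioned distributions for achievability, and Fekete-type super/subadditivity combined with strong connectivity to establish existence of the limit and its independence of $s_0$ --- is essentially the same route taken in that reference, and I see no gap in the plan.
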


In this paper, we consider a communication setting that involves delayed feedback of $d$ time instances, as depicted in Fig. \ref{fig:FSC}. In particular, at time $t$, the encoder has access to the message $m$ and the channel outputs up to time $t-d$. As a result, the encoder output $x_t$ is a function of both the message and the delayed feedback. The channel input $x_t$ then goes through a FSC, and the resulting output $y_t$ enters the decoder. The encoder then receives the feedback sample with a delay of $d$ time instances. When the feedback has a delay of $d$ time instances, the maximization over the directed information in Theorem \ref{FSC_feedback_Capacity} is performed over $P(x^n\|y^{n-d})$ instead of over $P(x^n\|y^{n-1})$ \cite{PermuterWeissmanGoldsmith}.

\subsection{MDP Framework}
In this section, we introduce the MDP problem, which is a mathematical framework for modeling decision-making problems in which the outcomes of actions are uncertain and dependent on the current state of the system. Specifically, we consider an MDP problem with a state space $\mathcal{Z}$, an action space $\mathcal{U}$, and a disturbance space $\mathcal{W}$. The initial state $z_0\in\mathcal{Z}$ is randomly drawn from a distribution $P_Z$. At each time step $t$, the system is in a state $z_{t-1}\in\mathcal{Z}$, the decision-maker selects an action $u_t\in\mathcal{U}$, and a disturbance $w_t\in\mathcal{W}$ is drawn according to a conditional distribution $P_w(\cdot|z_{t-1},u_t)$. The state $z_t$ then evolves according to a transition function $F:\mathcal{Z}\times\mathcal{U}\times\mathcal{W}\to\mathcal{Z}$, i.e., $z_{t}=F(z_{t-1},u_{t},w_{t})$. The disturbance $w_t$ represents uncertainty and exogenous influences that affect the system's evolution over time. These disturbances capture external factors and stochastic elements beyond the control of the decision-maker.

The decision-maker selects the action $u_t$ according to a deterministic function $\mu_t$, which maps histories $h_t=(z_0,w_0,\dots,w_{t-1})$ onto actions, i.e. $u_t = \mu_t(h_t)$. Note that given history $h_t$ and a policy $\pi=\{\mu_1,\mu_2,...\}$, one can compute past states $z_1,\dots,z_{t-1}$ and actions $u_1,\dots,u_{t-1}$. Given a policy $\pi$ and a bounded reward function $g:\mathcal{Z}\times\mathcal{U}\to\mathbb{R}$, our goal is to maximize the average reward over an infinite time horizon. The average reward achieved by policy $\pi$ is defined as $$\rho_\pi = \liminf_{n\to\infty}\frac{1}{n}\mathbb{E}_\pi\left[\sum_{t=0}^{n-1}g\left(Z_t,\mu_{t+1}(h_{t+1})\right)\right].$$ The optimal average reward is denoted by $\rho^*$ and is achieved by the policy that maximizes the expected sum of rewards over time, i.e., $\rho^*=\sup_{\pi}\rho_\pi$.

The following theorem presents the Bellman equation in the context of the formulation defined above. For MDP problems, this equation provides a sufficient condition for determining whether a given average reward is optimal.
\begin{theorem}[Bellman equation, \cite{Arapos93_average_cose_survey}]\label{Theorem:Bellman_original}
    If a scalar $\rho\in\mathbb{R}$ and a bounded function  $h:\mathcal{Z}\rightarrow\mathbb{R}$ satisfy
    \begin{align*}
        \rho+h(z) = \sup_{u\in\mathcal{U}}\pr{g\pr{z,u}+\int    P_w(dw|z,u) h\pr{F\pr{z,u,w}}},\\
        \forall z\in\mathcal{Z}
    \end{align*}
    then $\rho=\rho^{*}$.
\end{theorem}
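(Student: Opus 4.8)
The plan is to recognize this as a verification (sufficiency) theorem for the average‑reward criterion and to prove the single conclusion $\rho=\rho^{*}$ by establishing the two inequalities $\rho\ge\rho^{*}$ and $\rho\le\rho^{*}$ separately. The Bellman equation will be used once as an \emph{inequality} (to dominate the reward of an arbitrary policy) and once as an \emph{equality} (to construct a policy that attains $\rho$). The only structural tools needed are the telescoping identity for the bounded function $h$ along a controlled trajectory together with the law of total expectation.

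First I would prove $\rho\ge\rho_\pi$ for every admissible policy $\pi$. Since the right‑hand side of the Bellman equation is a supremum over $u$, for any state $z$ and any action $u$ we have the pointwise bound $\rho+h(z)\ge g(z,u)+\int P_w(dw|z,u)\,h\pr{F(z,u,w)}$. I would evaluate this along the trajectory induced by $\pi$ with $z=Z_t$ and $u=U_{t+1}:=\mu_{t+1}(h_{t+1})$; because $Z_{t+1}=F(Z_t,U_{t+1},W_{t+1})$ with $W_{t+1}\sim P_w(\cdot|Z_t,U_{t+1})$, the integral equals $\mathbb{E}_\pi[h(Z_{t+1})\mid h_{t+1},U_{t+1}]$. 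Rearranging yields the per‑step estimate $g(Z_t,U_{t+1})\le \rho+h(Z_t)-\mathbb{E}_\pi[h(Z_{t+1})\mid h_{t+1},U_{t+1}]$. Taking expectations, invoking the tower property, and summing over $t=0,\dots,n-1$ telescopes the $h$‑terms into $\mathbb{E}_\pi\pr{\sum_{t=0}^{n-1}g(Z_t,U_{t+1})}\le n\rho+\mathbb{E}[h(Z_0)]-\mathbb{E}[h(Z_n)]$. Dividing by $n$ and taking $\liminf$, the boundedness of $h$ forces $\tfrac1n\pr{\mathbb{E}[h(Z_0)]-\mathbb{E}[h(Z_n)]}\to 0$, so $\rho_\pi\le\rho$; taking the supremum over $\pi$ gives $\rho^{*}\le\rho$.

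For the reverse inequality I would exhibit a policy that (nearly) attains $\rho$. For each $z$ I would choose an action $u^\star(z)$ attaining the supremum in the Bellman equation, or, when the supremum is not attained, an $\varepsilon$‑optimal action; the induced stationary policy $\pi^\star$ turns every inequality of the previous paragraph into an equality (respectively, into an inequality tight up to $\varepsilon$). The identical telescoping computation then gives $\rho_{\pi^\star}\ge\rho$ (respectively $\ge\rho-\varepsilon$), and letting $\varepsilon\downarrow 0$ yields $\rho^{*}\ge\rho$. Combining the two bounds gives $\rho=\rho^{*}$.

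The hard part will be making the achievability direction fully rigorous in the general measurable setting: one must guarantee the existence of a \emph{measurable} selector $u^\star(\cdot)$ attaining (or $\varepsilon$‑attaining) the supremum over $\mathcal{U}$, which requires a measurable‑selection argument, and one must justify interchanging expectation, summation, and limit in the telescoping step. Both are controlled by the boundedness of $g$ and $h$ together with the standard‑Borel structure of the spaces; in our applications $\mathcal{Z},\mathcal{U},\mathcal{W}$ are finite, so the selector exists trivially and every interchange is elementary, which is precisely why the general statement of \cite{Arapos93_average_cose_survey} applies directly here.
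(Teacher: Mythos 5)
Your proposal is correct: the two-inequality verification argument (using the Bellman relation once as a pointwise inequality to dominate an arbitrary policy via telescoping of the bounded bias function $h$, and once with an exact or $\varepsilon$-optimal selector to exhibit a policy attaining $\rho$) is the standard proof of this sufficiency theorem, and the boundedness of $h$ is used exactly where it must be, to kill the $\tfrac1n\pr{\mathbb{E}[h(Z_0)]-\mathbb{E}[h(Z_n)]}$ term. Note, however, that the paper does not prove this statement at all --- it imports it by citation from the average-cost MDP survey \cite{Arapos93_average_cose_survey} --- so there is no in-paper argument to compare against; your proof supplies the missing justification, and your closing remark correctly identifies that the measurable-selection and interchange issues, which are the only delicate points in general state/action spaces, are vacuous in the finite setting in which the paper actually invokes the theorem (Table \ref{table: main1}).
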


\section{Bounds on Feedback Capacity}\label{sec: methodologies}
In this section, we introduce computable bounds on the feedback capacity of unifilar FSCs. These bounds were introduced for unifilar FSCs with instantaneous feedback in \cite{Sabag_UB_IT,Sabag_DB_FB}. In Section \ref{sec: delayed_feedback}, we demonstrate that they can be extended to the case of delayed feedback as well. 
\subsection{The $Q$-graph Bounds}\label{sec: method1_qbounds}
We begin by introducing an auxiliary tool known as the \textit{$Q$-graph}. For a fixed $Q$-graph, we then present the single-letter upper and lower bounds that were established in \cite{Sabag_UB_IT}. The $Q$-graph is a directed, connected, and labeled graph, for which each of its nodes have $|\mathcal{Y}|$ outgoing edges with distinct labels from the channel output alphabet. Given an initial node, an output sequence, $y^t$, is mapped onto a unique node by walking along the labeled edges. An example of a $Q$-graph is provided in Fig. \ref{fig:1Markov}. The induced mapping is denoted by $\Phi_{t}:{\cY}^{t}\to \cQ$, which can be presented alternatively as a function $\phi:\cQ\times\cY\to\cQ$. Namely, a new graph node can be computed as a time-invariant function of the previous node and a channel output.
\begin{remark}
A special case of a $Q$-graph is a \emph{$k$th-order Markov $Q$-graph}, which is defined on the set of nodes $\cQ = \cY^k$; for each node $q = (y_1,y_2,\ldots,y_k)$, the outgoing edge labeled $y \in \cY$ goes to the node $(y_2,\ldots,y_k,y)$. For instance, Fig.~\ref{fig:1Markov} shows a Markov $Q$-graph with $\mathcal{Y} = \{ 0, 1 \}$ and $k=1$. 
\end{remark}

For a fixed FSC and a given $Q$-graph, consider the $(S,Q)$-graph, an additional directed graph that combines both the information on the $Q$-graph and the evolution of the channel states. Specifically, split each node in the $Q$-graph into $|\cS|$ new nodes, which are represented by pairs $(s,q)\in\cS\times\cQ$. Then, an edge labeled $(x,y)$ from node $(s,q)$ to node $(s^+,q^+)$ exists if and only if there is a pair $(x,y)$ such that $s^+=f(s,x,y)$, $q^+=\phi(q,y)$, and $P(y|x,s)>0$. The pair of functions $(f,\phi)$ are given by the channel and the fixed $Q$-graph. For any choice of input distribution $P_{X|S,Q}$, the transition probabilities on the edges of the $(S,Q)$-graph are computed as
\begin{align}\label{eq: sq_transition}
        &P(s^+,q^+|s,q) \nn\\
        &= \sum_{x,y}P(x,y,s^+,q^+|s,q)\nn\\
        &\stackrel{(a)}= \sum_{x,y}P(x|s,q)P(y|x,s)\mathbbm{1}_{\{q^+=\phi(q,y)\}}\mathbbm{1}_{{\{s^+=f(s,x,y)\}}},
\end{align}
where $(a)$ follows by the channel law and by the fact that $q^+$ is a deterministic function of $(q,y)$. We define the notation $\mathcal{P}_{\pi}$ as the set of input distributions $P_{X|S,Q}$ that induce a unique stationary distribution on $(S,Q)$, namely, their corresponding $(S,Q)$-graph is irreducible and aperiodic.

\begin{figure}[tb]
\centering
    \psfrag{Q}[][][0.8]{$\;Y=1$}
    \psfrag{E}[][][0.8]{$\;\;Y=0$}
    \psfrag{L}[][][0.8]{$Q=1$}
    \psfrag{H}[][][0.8]{$Q=0$}
    \includegraphics[scale = 0.45]{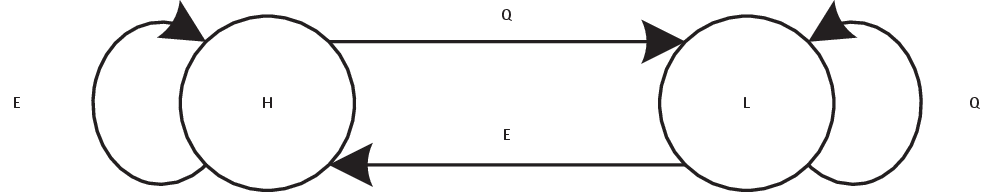}
    \caption{An example of a $Q$-graph with $\mathcal{Q}=2$ and $\mathcal{Y}=\{0,1\}$.}
    \label{fig:1Markov}
\end{figure}

In the following theorem we introduce the upper bound.
\begin{theorem}\cite[Theorem $2$]{Sabag_UB_IT}\label{theorem:Q_UB}
The feedback capacity of a strongly connected unifilar FSC, where the initial state is available to both the encoder and the decoder, is bounded by
\begin{align}\label{eq:Theorem_Upper}
\mathrm{C}^{\mathrm{fb}}_\mathrm{1}\leq \sup_{P_{X|S,Q}\in\mathcal{P}_{\pi}}I(X,S;Y|Q),
\end{align}
for all $Q$-graphs for which the $(S,Q)$-graph has a single and aperiodic closed communicating class. The joint distribution is $P_{Y,X,S,Q}=P_{Y|X,S}P_{X|S,Q}\pi_{S,Q}$, where $\pi_{S,Q}$ is the stationary distribution of the $(S,Q)$-graph.
\end{theorem}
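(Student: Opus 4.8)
The plan is to start from the multi-letter characterization in Theorem~\ref{FSC_feedback_Capacity} and reduce the normalized directed information to a single-letter mutual information of the claimed form, one channel use at a time. Since the initial state $S_0$ is available to both terminals, I would condition on it throughout and apply the chain rule for directed information,
\begin{align*}
I(X^n\to Y^n\mid S_0) = \sum_{t=1}^n I(X^t;Y_t\mid Y^{t-1},S_0) = \sum_{t=1}^n\left[H(Y_t\mid Y^{t-1},S_0) - H(Y_t\mid X^t,Y^{t-1},S_0)\right].
\end{align*}
The two ingredients that drive the per-letter bound are: (i) the graph node $Q_{t-1}=\Phi_{t-1}(Y^{t-1})$ is a deterministic function of the output history, so that ``conditioning reduces entropy'' gives $H(Y_t\mid Y^{t-1},S_0)\le H(Y_t\mid Q_{t-1})$; and (ii) by unifilarity the state $S_{t-1}$ is a deterministic function of $(S_0,X^{t-1},Y^{t-1})$, so the channel law~\eqref{eq: channel_markov} yields $H(Y_t\mid X^t,Y^{t-1},S_0)=H(Y_t\mid X_t,S_{t-1})=H(Y_t\mid X_t,S_{t-1},Q_{t-1})$, the last equality holding because $Y_t$ depends on the past only through $(X_t,S_{t-1})$. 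Combining (i) and (ii) bounds each summand by $I(X_t,S_{t-1};Y_t\mid Q_{t-1})$.

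Next I would collapse the sum into a single letter by introducing a time-sharing variable $T$ uniform on $\{1,\dots,n\}$ and setting $X=X_T$, $S=S_{T-1}$, $Q=Q_{T-1}$, $Y=Y_T$. Then $\frac1n\sum_t I(X_t,S_{t-1};Y_t\mid Q_{t-1}) = I(X,S;Y\mid Q,T)$, and since $H(Y\mid X,S,Q,T)=H(Y\mid X,S)$ by the channel law while $H(Y\mid Q,T)\le H(Y\mid Q)$, this quantity is at most $I(X,S;Y\mid Q)$ evaluated under the time-averaged joint law $\bar P^{(n)}_{S,Q,X,Y}$. At this point every term on the right-hand side has the single-letter form appearing in~\eqref{eq:Theorem_Upper}.

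The hard part will be the final step, namely showing that the time-averaged law is (asymptotically) a feasible point of the supremum. The averaged marginal $\bar P^{(n)}_{S,Q}$ need not be the stationary distribution of the averaged input kernel $\bar P^{(n)}_{X\mid S,Q}$, whereas the supremum ranges only over kernels in $\mathcal{P}_{\pi}$ paired with \emph{their} stationary distributions. To close this gap I would use a flow-conservation argument: because the $(S,Q)$-transition is time-invariant, the averaged joint on $(S_{t-1},Q_{t-1},X_t)$ and the averaged joint on $(S_t,Q_t)$ differ only through the boundary term $\frac1n\left[P(S_n,Q_n)-P(S_0,Q_0)\right]$, which vanishes as $n\to\infty$; hence any subsequential limit of $\bar P^{(n)}_{S,Q}$ is stationary for the corresponding limiting kernel and therefore lies in the feasible set, bounding the capacity by the supremum. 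The remaining technical care concerns the single and aperiodic closed-communicating-class hypothesis, which guarantees a unique stationary distribution so that $I(X,S;Y\mid Q)$ is well defined and continuous in the kernel; a perturbation/continuity argument then ensures that limiting kernels inducing a degenerate chain are still dominated by the supremum over $\mathcal{P}_{\pi}$.
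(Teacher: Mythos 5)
Your proposal is correct and follows essentially the same route as the proof of this result in the cited reference \cite{Sabag_UB_IT} (the present paper only imports the theorem and does not reprove it): the chain rule for directed information conditioned on $S_0$, conditioning-reduces-entropy via the deterministic map $Y^{t-1}\mapsto Q_{t-1}$, unifilarity plus the channel law to identify $H(Y_t\mid X^t,Y^{t-1},S_0)=H(Y_t\mid X_t,S_{t-1},Q_{t-1})$, and a time-averaging/stationarity argument relying on the single aperiodic closed communicating class to land in the feasible set $\mathcal{P}_\pi$. You also correctly isolate the only genuinely delicate step, namely that the time-averaged $(S,Q)$-marginal need not be the stationary law of the time-averaged kernel, which is exactly where the closed-communicating-class hypothesis is used in the original argument.
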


We proceed to describe the lower bound. Let us first define a property called the \textit{BCJR-invariant input}. An input distribution $P_{X|S,Q}$ is said to be an \textit{aperiodic input}
if its $(S,Q)$-graph is aperiodic, and an aperiodic input distribution is said to be \textit{BCJR-invariant} if the Markov chain $S^+-Q^+-(Q,Y)$ holds. A simple verification of the Markov chain is given by the following equation:
\begin{align}\label{eq: BCJR}
    \pi(s^+|q^+) = \frac{\sum_{x,s}\pi(s|q)P(x|s,q)P(y|x,s)\mathbbm{1}_{\{s^+=f(x,y,s)\}}}{\sum_{x',s'}\pi(s'|q)P(x'|s',q)P(y|x',s')},
\end{align}
which needs to hold for all $(s^+, q, y)$ and $q^+ = \phi(q,y)$.

Now that we have defined a BCJR-invariant input distribution, the lower bound can be introduced.
 \begin{theorem}\cite[Theorem $3$]{Sabag_UB_IT}\label{theorem:bcjr_general}
If the initial state $s_0$ is available to both the encoder and the decoder, then the feedback capacity of a strongly connected unifilar FSC is bounded by
\begin{align}\label{eq:Theorem_Lower}
\mathrm{C}^{\mathrm{fb}}_\mathrm{1}&\geq I(X,S;Y|Q),
\end{align}
for all aperiodic inputs $P_{X|S,Q}\in\mathcal{P}_\pi$ that are BCJR-invariant, and for all irreducible $Q$-graphs with $q_0$ such that $(s_0,q_0)$ lies in an aperiodic closed communicating class.
\end{theorem}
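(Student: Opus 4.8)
The plan is to establish the bound as an achievability result. By Theorem~\ref{FSC_feedback_Capacity} the feedback capacity equals the limit of the normalized maximal directed information, so it suffices to exhibit a single admissible feedback input distribution whose directed-information rate equals $I(X,S;Y|Q)$. I would use the given BCJR-invariant distribution $P_{X|S,Q}$ as a time-invariant controller: at each time $t$ draw $X_t\sim P_{X|S,Q}(\cdot\,|\,S_{t-1},Q_{t-1})$. This is admissible because, under the unifilar property $s_t=f(s_{t-1},x_t,y_t)$ and the known initial state $s_0$, the state $S_{t-1}$ is a deterministic function of $(s_0,X^{t-1},Y^{t-1})$, while $Q_{t-1}=\Phi_{t-1}(Y^{t-1})$ is a deterministic function of $Y^{t-1}$. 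I would then decompose
\begin{align*}
    \frac{1}{n}I(X^n\rightarrow Y^n) = \frac{1}{n}\sum_{t=1}^{n}\Big(H(Y_t\,|\,Y^{t-1}) - H(Y_t\,|\,X^t,Y^{t-1})\Big),
\end{align*}
and analyze the two families of terms separately.

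The backward term is immediate: since $S_{t-1}$ is a function of $(s_0,X^{t-1},Y^{t-1})$ and the channel law \eqref{eq: channel_markov} gives $P(y_t|x^t,y^{t-1},s^{t-1})=P(y_t|x_t,s_{t-1})$, one has $H(Y_t|X^t,Y^{t-1})=H(Y_t|X_t,S_{t-1})$. The crux is the forward term, where I would prove the exact identity $H(Y_t|Y^{t-1})=H(Y_t|Q_{t-1})$. This reduces to showing that the decoder's belief satisfies $P(S_{t-1}|Y^{t-1})=\pi(S_{t-1}|Q_{t-1})$ for every $t$, since then
\begin{align*}
    P(y_t|y^{t-1}) = \sum_{s_{t-1},x_t}\pi(s_{t-1}|q_{t-1})P(x_t|s_{t-1},q_{t-1})P(y_t|x_t,s_{t-1})
\end{align*}
depends on $y^{t-1}$ only through $q_{t-1}$. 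The belief identity is exactly what BCJR-invariance buys: initializing $S_0\sim\pi(\cdot|q_0)$, the Bayesian (BCJR) update of $P(S_{t-1}|Y^{t-1})$ coincides term-by-term with the right-hand side of \eqref{eq: BCJR}, and \eqref{eq: BCJR} asserts that this update returns $\pi(\cdot|q_t)$ with $q_t=\phi(q_{t-1},y_t)$. An induction on $t$ therefore propagates the identity for all~$t$.

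Finally, combining the two terms gives $\frac{1}{n} I(X^n\to Y^n)=\frac{1}{n}\sum_t\big(H(Y_t|Q_{t-1})-H(Y_t|X_t,S_{t-1})\big)$, and since $H(Y_t|X_t,S_{t-1},Q_{t-1})=H(Y_t|X_t,S_{t-1})$ by the channel law, each summand equals $I(X_t,S_{t-1};Y_t|Q_{t-1})$ under the induced joint law. Here I would invoke the assumption that $(s_0,q_0)$ lies in an aperiodic closed communicating class of the $(S,Q)$-graph: this guarantees that the Markov chain governed by \eqref{eq: sq_transition} converges to its unique stationary distribution $\pi_{S,Q}$, so the Ces\`aro average of the summands converges to the single-letter value $I(X,S;Y|Q)$ computed under $P_{Y,X,S,Q}=P_{Y|X,S}P_{X|S,Q}\pi_{S,Q}$. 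I expect the belief-matching induction to be the main obstacle, together with the bookkeeping needed to reconcile the fixed known initial state $s_0$ with the stationary initialization $S_0\sim\pi(\cdot|q_0)$; the mismatch is a bounded transient that is washed out by the $1/n$ normalization once the chain mixes, which is precisely why aperiodicity and a single closed communicating class are required.
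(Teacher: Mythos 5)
The paper never proves this statement itself --- it is imported verbatim from \cite{Sabag_UB_IT} --- so I can only measure your argument against the standard proof there, which follows essentially the route you describe: run the stationary graph-based policy, split the directed information into $H(Y_t|Y^{t-1})-H(Y_t|X_t,S_{t-1})$, use BCJR-invariance to collapse $H(Y_t|Y^{t-1})$ to $H(Y_t|Q_{t-1})$, and let ergodicity of the $(S,Q)$-chain produce the single-letter limit. Your handling of the backward term and of the Ces\`aro limit is fine.

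The gap is precisely the step you flag as the ``main obstacle'': the base case of the belief induction. The hypothesis is a \emph{fixed, known} initial state, so the decoder's initial belief is $\delta_{s_0}$, which in general is not $\pi(\cdot|q_0)$; the induction therefore never gets started, and $P(S_{t-1}|Y^{t-1})$ need not equal $\pi(\cdot|Q_{t-1})$ at any finite time. Your proposed repair --- that the mismatch is a bounded transient washed out once the chain mixes --- conflates two different notions: aperiodicity of the $(S,Q)$-graph gives convergence of the \emph{marginal} law of $(S_t,Q_t)$ to $\pi_{S,Q}$, whereas what you need is stability of the nonlinear filter, i.e.\ that the Bayes recursion forgets its initial condition and is attracted to the family $\{\pi(\cdot|q)\}$. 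BCJR-invariance only asserts that this family is \emph{invariant} under the update, not that it is attracting, and the per-step deficit $I(Y_t;Y^{t-1}|Q_{t-1})\ge 0$ is bounded only by $\log|\mathcal{S}|$, so without a genuine forgetting argument the accumulated error can be $\Theta(n)$ and the lower bound fails. Note also that you cannot simply pass to the ``stationary initialization'' $S_0\sim\pi(\cdot|q_0)$: the encoder must track $S_{t-1}$ through the unifilar recursion from a known starting point, so an unobserved randomized initial state is not an admissible scheme here, and if the realization is revealed to the encoder but not the decoder, relating the resulting directed information to the maximization in Theorem~\ref{FSC_feedback_Capacity} requires an additional bounded-difference argument you would have to supply. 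This initialization/filter-stability issue is exactly where the hypothesis that $(s_0,q_0)$ lies in an aperiodic closed communicating class is meant to do work, and it needs an actual argument rather than an appeal to mixing.
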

Henceforth, we refer to a pair of a $Q$-graph and an input distribution $P_{X|S,Q}$ that satisfies the BCJR-invariant property as a \emph{graph-based encoder}.

\begin{remark}\label{remark: coding_scheme}
The upper bound in Theorem \ref{theorem:Q_UB} can be formulated as a convex optimization \cite{OronBasharfeedback}. As a result, for a fixed $Q$-graph, the upper bound can be efficiently evaluated. On the other hand, the lower bound optimization results in a non-convex optimization problem, but it still has the advantage that any feasible point (i.e. BCJR-invariant input distribution) induces a graph-based encoder. It was shown in \cite{OronBasharfeedback} that any graph-based encoder implies a simple coding scheme that achieves the lower bound. The scheme is based on the posterior matching principle in \cite{shayevitz_posterior_mathcing}, but for channels with memory whose details are given in \cite{OronBasharfeedback,Sabag_BIBO}.
\end{remark}
\begin{remark}
The selection of the $Q$-graph significantly impacts the performance of the bounds. To identify a $Q$-graph with good performance, we suggest conducting an exhaustive search over all possible $Q$-graphs, as explained in detail in \cite{OronBasharfeedback}. While such an exploration can become computationally expensive with increasing $Q$-graph size, it is often sufficient to consider a small cardinality graph to obtain good performance or capacity-achieving bounds. Another approach is to evaluate the performance of Markov $Q$-graphs, which often provide good performance bounds. A Matlab implementation of the optimization problems, including the $Q$-graph search methods, is available in \cite{Github_OronBashar}.
\end{remark}

\subsection{Upper Bounds via Duality}\label{sec: method2_DB}
Here we present computable upper bounds on the capacity of unifilar FSCs from \cite{Sabag_DB_FB}, that are based on the dual capacity upper bound \cite{Topsoe67}. For the sake of clarity, we first introduce the dual capacity upper bound for a discrete memoryless channel. Specifically, for a memoryless channel, $P_{Y|X}$, and for any choice of a test distribution, $T_Y$, on the channel output alphabet, the dual capacity upper bound states that
\begin{align}\label{DB_MC}
\mathrm{C}\leq \max_{x\in\mathcal{X}} D\pr{P_{Y|X=x} \| T_{Y}}.
\end{align}

The choice of the test distribution is crucial since it directly affects the performance of the bound. If the test distribution is equal to the optimal output distribution, then the upper bound is tight. For FSCs, the dual upper bound depends on a test distribution, $T_{Y^n}$, with memory. In \cite{Huleihel_Sabag_DB,Sabag_DB_FB}, test distributions that are structured on a $Q$-graph were proposed, that is, the following equality holds:
\begin{align}\label{eq:test_dist}
	T_{Y^n}(y^n) = \prod_{t=1}^n T_{Y|Q}(y_t|q_{t-1}),
\end{align}
where $q_{t-1}=\Phi(y^{t-1})$. We refer to such test distributions as \emph{graph-based test distributions}. The use of graph-based test distributions yielded the result in the theorem below.
\begin{theorem}[Computable upper bounds]\cite[Theorem $4$]{Sabag_DB_FB}\label{theorem: DUB_Q}
For any graph-based test distribution $T_{Y|Q}$, the feedback capacity of a strongly connected unifilar FSC is bounded by
\begin{align}\label{eq: DUB_Q}
    \mathrm{C}^{\mathrm{fb}}_\mathrm{1}& \le \lim_{n\to\infty}\max_{f(x^n\|y^{n-1})}\min_{s_0,q_0}\nn\\&\frac{1}{n}\sum_{i=1}^n\mathbb{E}\left[D\left(P_{Y|X,S}(\cdot|x_i,S_{i-1})\Bigg{\|}T_{Y|Q}(\cdot|Q_{i-1})\right)\right],
\end{align}
where $f(x^n\|y^{n-1})$ stands for causal conditioning of deterministic functions, i.e.
\begin{align}
    f(x^n\|y^{n-1}) = \prod_i \mathbbm{1}_{\{x_i=f_i(x^{i-1},y^{i-1})\}}.\nn
\end{align}
Additionally, the upper bound in \eqref{eq: DUB_Q} defines an infinite-horizon average reward MDP that is presented in Table \ref{table: main1}. 
\end{theorem}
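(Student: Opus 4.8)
The plan is to start from the duality (Topsøe) upper bound, specialize it to feedback capacity and to graph-based test distributions, and then recast the result as an average-reward MDP. First I would recall the dual representation of directed information: writing $P(y^n\|x^n)=\prod_{i=1}^n P(y_i|y^{i-1},x^i)$ for the causally conditioned output law, one has $I(X^n\to Y^n)=\mathbb{E}\left[\log\frac{P(Y^n\|X^n)}{P(Y^n)}\right]$. For any test distribution $T_{Y^n}$ on the output sequence,
\[
I(X^n\to Y^n)=\mathbb{E}\left[\log\frac{P(Y^n\|X^n)}{T_{Y^n}(Y^n)}\right]-D\pr{P_{Y^n}\|T_{Y^n}}\le\mathbb{E}\left[\log\frac{P(Y^n\|X^n)}{T_{Y^n}(Y^n)}\right],
\]
since relative entropy is non-negative. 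Combined with the capacity formula of Theorem~\ref{FSC_feedback_Capacity}, this is already a valid upper bound for every choice of $T_{Y^n}$, and the remaining task is to turn the right-hand side into the single-letter average-reward form of \eqref{eq: DUB_Q}.

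Second, I would invoke the unifilar structure together with the graph-based form \eqref{eq:test_dist}. Conditioning on a known initial state $s_0$, the deterministic update $s_{i-1}=f(s_0,x^{i-1},y^{i-1})$ gives $P(y_i|y^{i-1},x^i,s_0)=P_{Y|X,S}(y_i|x_i,s_{i-1})$, while the $Q$-graph mapping gives $T_{Y^n}(y_i|y^{i-1})=T_{Y|Q}(y_i|q_{i-1})$ with $q_{i-1}=\Phi(y^{i-1})$ started from $q_0$. Taking the inner expectation over $Y_i$ given $(X_i,S_{i-1},Q_{i-1})$ turns each term into a conditional relative entropy, so the bound reads
\[
\tfrac1n I(X^n\to Y^n\mid s_0)\le\tfrac1n\sum_{i=1}^n\mathbb{E}\left[D\pr{P_{Y|X,S}(\cdot|X_i,S_{i-1})\|T_{Y|Q}(\cdot|Q_{i-1})}\right],
\]
which is exactly the per-letter reward appearing in \eqref{eq: DUB_Q}.

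Third, I would carry out the two reductions that produce the stated optimization. The right-hand side is an expected additive reward of a controlled process whose state is the pair $(s,q)$, whose action is the input $x$, and whose disturbance is the channel output $y\sim P_{Y|X,S}(\cdot|x,s)$, evolving by $(s^+,q^+)=\pr{f(s,x,y),\phi(q,y)}$ with stage reward $g(s,q,x)=D\pr{P_{Y|X,S}(\cdot|x,s)\|T_{Y|Q}(\cdot|q)}$; this is precisely the MDP of Table~\ref{table: main1}. Since an expected additive reward is affine in the feedback input law, the supremum over stochastic policies $P(x^n\|y^{n-1})$ is attained at a deterministic feedback policy $f(x^n\|y^{n-1})$, justifying the replacement of the maximization by $\max_{f(x^n\|y^{n-1})}$. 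For the initial conditions, the displayed bound is valid for every $s_0$ and every $q_0$ — any starting $Q$-node yields an admissible graph-based test distribution, and by strong connectivity the feedback capacity is independent of $s_0$ — so one may take the minimum over $(s_0,q_0)$ and still retain a valid upper bound.

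Finally, I would reconcile the order of the operators. The elementary argument yields $\mathrm{C}^{\mathrm{fb}}_1\le\min_{s_0,q_0}\lim_n\max_f\tfrac1n\sum_i\mathbb{E}[\cdot]$, whereas \eqref{eq: DUB_Q} is stated in the order $\lim_n\max_f\min_{s_0,q_0}$. These coincide because the induced $(S,Q)$-process is a communicating (unichain) MDP: by the average-reward theory and the Bellman equation of Theorem~\ref{Theorem:Bellman_original}, the optimal average reward $\rho^*$ is independent of the initial state, so an optimal stationary policy attains $\rho^*$ from every $(s_0,q_0)$. Hence both $\max_f\min_{s_0,q_0}$ and $\min_{s_0,q_0}\max_f$ equal $\rho^*$, and the stated $\max$–$\min$ expression is a valid and equally tight upper bound. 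I expect the main obstacle to be exactly this last step: justifying the interchange of $\lim_n$, $\max_f$, and $\min_{s_0,q_0}$ while preserving a valid upper bound, together with the reduction to deterministic encoders — both of which rest on the unichain/strong-connectivity structure of the $(S,Q)$-graph rather than on any direct manipulation of the directed information.
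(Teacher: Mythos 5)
The paper does not prove this theorem itself --- it imports it verbatim from \cite{Sabag_DB_FB} --- so the comparison is against the proof in that reference. Your proposal reconstructs that proof's essential skeleton correctly: the non-negativity of $D\pr{P_{Y^n}\|T_{Y^n}}$ applied to the directed-information form of Theorem~\ref{FSC_feedback_Capacity}, the collapse to per-letter conditional divergences via the unifilar recursion $s_{i-1}=f(s_0,x^{i-1},y^{i-1})$ and the factorization \eqref{eq:test_dist}, the reduction to deterministic encoders by multilinearity of the objective in the conditionals $P(x_i|x^{i-1},y^{i-1})$, and the identification of the $(s,q)$ pair as the MDP state of Table~\ref{table: main1}. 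The one place you deviate is the treatment of $\min_{s_0,q_0}$: you first establish the weaker $\min$-outside form $\mathrm{C}^{\mathrm{fb}}_1\le\min_{s_0,q_0}\lim_n\max_f(\cdot)$ and then appeal to unichain average-reward theory to equate it with the stated $\max_f\min_{s_0,q_0}$ form. That interchange is only justified under a single-recurrent-class assumption on the induced $(S,Q)$-process, which the theorem does not impose, so as written this step leaves a gap. The standard (and cleaner) argument avoids it entirely: for a strongly connected FSC the converse already yields the worst-case-initial-state form $\mathrm{C}^{\mathrm{fb}}_1\le\lim_n\frac1n\max\min_{s_0}I(X^n\to Y^n\mid s_0)$, and for each \emph{fixed} encoder and each fixed $s_0$ the dual bound holds for \emph{every} choice of $q_0$ (each initial node gives a legitimate graph-based test distribution); hence the minimum over $(s_0,q_0)$ sits inside the maximization from the outset and no min--max interchange is needed. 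With that repair your argument is complete and matches the cited proof.
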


\begin{table}[b]
\caption{MDP Formulation}
\label{table: main1}
\begin{center}
\scalebox{1.1}{
 \begin{tabular}{|c | c |} 
 \hline
MDP notations & Upper bound on capacity \\ [0.5ex] 
 \hline\hline
MDP state & $z_{t-1} \triangleq (s_{t-1}, q_{t-1})$ \\ [0.5ex]
 \hline
Action & $u_t \triangleq x_t$ \\[0.5ex]
 \hline
Disturbance & $w_t \triangleq y_t$ \\[0.5ex]
 \hline
The reward & $D\left(P_{Y|X,S}(\cdot|x_t,s_{t-1})\middle\|T_{Y|Q}(\cdot|q_{t-1})\right)$\\[0.5ex]
 \hline
\end{tabular} }
\end{center}
\end{table}

The following theorem is a simplification of the Bellman equation in Theorem \ref{Theorem:Bellman_original} for the case of the MDP formulation in Table \ref{table: main1}. 
\begin{theorem}[Bellman equation]\label{Theorem:Bellman}
If there exists a scalar $\rho\in\mathbb{R}$ and a bounded function $h:\mathcal{S}\times\mathcal{Q}\rightarrow\mathbb{R}$ such that
\begin{align}\label{eq:Bellman}
    \rho+h(s,q) &= \max_{x\in\mathcal{X}}\Big(D\left(P_{Y|X,S}(\cdot|x,s)\middle\|T_{Y|Q}(\cdot|q)\right)\nn\\&+\sum_{y\in\mathcal{Y}} P(y|x,s)h\pr{f(s,x,y),\phi(q,y)}\Big),
\end{align}
for all $(s,q)$, then $\rho=\rho^{*}$.
\end{theorem}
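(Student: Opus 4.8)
The plan is to derive the simplified Bellman equation of Theorem~\ref{Theorem:Bellman} directly from the general Bellman equation of Theorem~\ref{Theorem:Bellman_original} by specializing the abstract MDP quantities to the concrete ones listed in Table~\ref{table: main1}. The key observation is that the MDP underlying the dual upper bound is of the special finite type: the state is $z = (s,q) \in \mathcal{S}\times\mathcal{Q}$, the action is $u = x \in \mathcal{X}$, the disturbance is $w = y \in \mathcal{Y}$, the disturbance law is the channel law $P_w(\cdot|z,u) = P_{Y|X,S}(\cdot|x,s)$ (which depends on $z$ only through $s$), the deterministic transition is $F(z,u,w) = \big(f(s,x,y),\phi(q,y)\big)$, and the reward is $g(z,u) = D\!\left(P_{Y|X,S}(\cdot|x,s)\,\middle\|\,T_{Y|Q}(\cdot|q)\right)$. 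Because every one of these sets is finite, the integral over the disturbance in Theorem~\ref{Theorem:Bellman_original} collapses to a finite sum over $y \in \mathcal{Y}$, and the supremum over $u \in \mathcal{U}$ becomes a maximum over $x \in \mathcal{X}$.

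First I would substitute these identifications into the fixed-point equation of Theorem~\ref{Theorem:Bellman_original}. Writing $h(z) = h(s,q)$ and evaluating the expectation under $P_w(dw\,|\,z,u)$ as $\sum_{y\in\mathcal{Y}} P_{Y|X,S}(y|x,s)\, h\big(F(z,u,y)\big)$, the transition term $h(F(z,u,w))$ becomes $h\big(f(s,x,y),\phi(q,y)\big)$, and the reward contributes the divergence term that does not depend on $y$ and so factors out of the sum. This yields exactly display~\eqref{eq:Bellman}. The role of the boundedness and finiteness assumptions is to guarantee that the abstract hypotheses of Theorem~\ref{Theorem:Bellman_original} are met: the reward $g$ is a relative entropy between distributions on a finite alphabet and is bounded as long as the test distribution $T_{Y|Q}$ has full support wherever the channel does (which is part of what it means for the dual bound to be well defined), and the candidate function $h:\mathcal{S}\times\mathcal{Q}\to\mathbb{R}$ is bounded since $\mathcal{S}\times\mathcal{Q}$ is finite.

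The conclusion then follows immediately: if a scalar $\rho$ and a bounded $h$ solve~\eqref{eq:Bellman} for all $(s,q)$, then the same pair solves the general Bellman equation of Theorem~\ref{Theorem:Bellman_original} for all $z\in\mathcal{Z} = \mathcal{S}\times\mathcal{Q}$, and hence $\rho = \rho^{\ast}$ by that theorem. I do not expect any genuine obstacle here, since the statement is essentially a transcription of the general verification theorem into the finite-alphabet setting of Table~\ref{table: main1}; the only point requiring care is the bookkeeping that the reward depends on $(z,u)$ but not on the disturbance $w$, so that it can be pulled out of the disturbance sum, and that the transition kernel depends on the state only through the channel-state coordinate $s$ while the $Q$-node updates deterministically via $\phi$. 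Consequently, the proof is a short specialization argument rather than a new analytic result, and the main content is simply verifying that the finite MDP of Table~\ref{table: main1} satisfies the measurability and boundedness hypotheses under which Theorem~\ref{Theorem:Bellman_original} applies.
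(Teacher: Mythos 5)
Your proposal is correct and matches the paper's treatment exactly: the paper offers no separate proof of Theorem~\ref{Theorem:Bellman}, presenting it simply as the specialization of Theorem~\ref{Theorem:Bellman_original} to the finite MDP of Table~\ref{table: main1}, with the integral over the disturbance collapsing to a sum over $y\in\mathcal{Y}$ and the supremum becoming a maximum over $x\in\mathcal{X}$. Your added remarks on the boundedness of the reward and of $h$ on the finite set $\mathcal{S}\times\mathcal{Q}$ are sensible bookkeeping that the paper leaves implicit.
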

Following Theorem \ref{Theorem:Bellman}, it is sufficient to solve the Bellman equation associated with the MDP problem in Table \ref{table: main1} to show that the feedback capacity of a given FSC is upper bounded by the induced average reward of the MDP. Note that the MDP formulation in Table \ref{table: main1} consists of finite MDP states, actions, and disturbances. Thus, given an optimal policy, it is relatively easy to derive a conjectured solution for $(\rho^*,h(\cdot))$ since it is based on solving a finite set of linear equations. 
\begin{remark}
    Although the $Q$-graph upper bound has yielded new capacity results, its analytical computation remains challenging due to the necessity of verifying the Karush–Kuhn–Tucker (KKT) conditions, especially when dealing with channel parameters involving large alphabets. The approach proposed in this section provides an alternative and notably simpler method for deriving analytical upper bounds. However, it is essential to note that these bounds heavily depend on the selection of the test distribution, which must be chosen carefully to derive meaningful upper bounds.
\end{remark}

\section{Main Results}\label{seq:main_results}
This section summarizes the main results of this paper. We start with a general result, in which the delayed feedback capacity of any FSC can be computed as the instantaneous feedback capacity of a transformed FSC. By utilizing this reduction, we derive computable upper and lower bounds on the delayed feedback capacity of unifilar FSCs, which will be introduced in Section \ref{sec: delayed_new_formulas}. 

For a positive integer $d$ and for any FSC given by a transition kernel $P_{Y,S^+|X,S}$, we define the following transformation:
\begin{itemize}
    \item The channel state is $\tilde{S}_{t}\triangleq\left(S_{t-d+1},X_{t-d+2}^{t}\right)$.
    \item The channel output is $\tilde{Y}_t\triangleq Y_{t-d+1}$.
    \item The channel input remains the same, i.e. $\tilde{X}_t \triangleq X_t$.
\end{itemize}
First, it can be shown that the above transformation defines a new FSC with a transition kernel $P^d_{\tilde{Y},\tilde{S}^+|\tilde{X},\tilde{S}}$, where the superscript $d$ emphasizes the transformation dependence on the delay $d$. That is, the new channel follows the time-invariant Markov property of FSCs in \eqref{eq: channel_markov}. Second, we show in the following theorem the relation between the channel capacity of the original FSC and its transformation.
\begin{theorem}\label{th: delay_as_one}
The capacity of a FSC $P_{Y,S^+|X,S}$ with delayed feedback of $d$ time instances is equal to the instantaneous feedback capacity of the FSC $P^d_{\tilde{Y},\tilde{S}^+|\tilde{X},\tilde{S}}$. Furthermore, if the original FSC is unifilar, then the new transformed FSC is unifilar as well.
\end{theorem}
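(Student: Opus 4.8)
The plan is to establish three facts in sequence: that the transformation produces a bona fide FSC, that its instantaneous feedback capacity coincides with the delayed feedback capacity of the original channel, and that unifilarity is inherited. I would begin by writing the previous transformed state as $\tilde{S}_{t-1}=(S_{t-d},X_{t-d+1}^{t-1})$ and observing that, given $(\tilde{X}_t,\tilde{S}_{t-1})$, the only stochastic content of the pair $(\tilde{Y}_t,\tilde{S}_t)=(Y_{t-d+1},S_{t-d+1},X_{t-d+2}^{t})$ is the single original channel use at time $t-d+1$: the block $X_{t-d+2}^{t-1}$ is read off from $\tilde{S}_{t-1}$ by a shift, $X_t=\tilde{X}_t$ is the current input, and $(Y_{t-d+1},S_{t-d+1})$ is generated from $(X_{t-d+1},S_{t-d})$—both encoded in $\tilde{S}_{t-1}$—through the original kernel $P_{S^+,Y|X,S}$. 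This already dictates the definition of $P^d_{\tilde{Y},\tilde{S}^+|\tilde{X},\tilde{S}}$.

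To verify the FSC Markov property \eqref{eq: channel_markov} for the transformed channel, I would translate the conditioning $(\tilde{x}^t,\tilde{y}^{t-1},\tilde{s}^{t-1},m)$ back into the original variables, which amounts to $(X^t,Y^{t-d},S^{t-d},m)$. Applying the original property at time $\tau=t-d+1$ yields $P(S_{t-d+1},Y_{t-d+1}\mid X^{t-d+1},Y^{t-d},S^{t-d},m)=P_{S^+,Y|X,S}(S_{t-d+1},Y_{t-d+1}\mid X_{t-d+1},S_{t-d})$. The key step—and the main obstacle—is to remove the extra conditioning on the ``future'' inputs $X_{t-d+2}^{t}$. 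Here the delay is essential: under a feedback delay of $d$, each such input $X_j$ is a function of $(m,Y^{j-d})$ with $j-d\le t-d$, so $X_{t-d+2}^{t}$ is already determined by $(m,Y^{t-d})$ and therefore adds no information once we condition on $(X^{t-d+1},Y^{t-d},m)$. This delivers a time-invariant kernel and confirms the transformed channel is an FSC.

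For the capacity claim I would apply Theorem \ref{FSC_feedback_Capacity} to the transformed FSC and match the two optimization problems. The instantaneous-feedback constraint $P(\tilde{x}^n\|\tilde{y}^{n-1})$ translates, through $\tilde{X}_t=X_t$ and the correspondence $\tilde{Y}^{t-1}\leftrightarrow Y^{t-d}$, into exactly the delayed constraint $P(x^n\|y^{n-d})$, so the feasible sets coincide. For the objective, expanding $I(\tilde{X}^n\to\tilde{Y}^n)=\sum_t I(\tilde{X}^t;\tilde{Y}_t\mid\tilde{Y}^{t-1})$ and reindexing by $\tau=t-d+1$ produces terms $I(X^{\tau+d-1};Y_\tau\mid Y^{\tau-1})$. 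Each equals $I(X^\tau;Y_\tau\mid Y^{\tau-1})$ because the increment $I(X_{\tau+1}^{\tau+d-1};Y_\tau\mid X^\tau,Y^{\tau-1})$ vanishes: by the same delay argument the future inputs are generated from $(X^\tau,Y^{\tau-1})$ and input randomness alone, hence are conditionally independent of $Y_\tau$ given $(X^\tau,Y^{\tau-1})$. Summing and dividing by $n$, the $O(d)$ boundary terms and the fixed index shift are negligible in the limit, so the normalized objectives—and hence the capacities—agree.

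Finally, unifilarity is the easy part. Assuming $S_t=f(S_{t-1},X_t,Y_t)$, I would exhibit $\tilde{S}_t$ as a time-invariant function of $(\tilde{S}_{t-1},\tilde{X}_t,\tilde{Y}_t)$ componentwise: the leading component $S_{t-d+1}=f(S_{t-d},X_{t-d+1},\tilde{Y}_t)$ uses only the state and first stored input in $\tilde{S}_{t-1}$ together with $\tilde{Y}_t$; the block $X_{t-d+2}^{t-1}$ is the shift of the input block in $\tilde{S}_{t-1}$; and the last entry is $\tilde{X}_t$. Collecting these defines the required update function $\tilde{f}$, so the transformed channel is unifilar whenever the original is.
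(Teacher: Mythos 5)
Your proposal is correct and follows essentially the same route as the paper: verify that the transformation yields a bona fide FSC (unifilar whenever the original is), then match the instantaneous-feedback optimization of the transformed channel to the delayed-feedback optimization of the original, with the $O(d)$ boundary/index-shift effects vanishing as $n\to\infty$. If anything, you are more explicit than the paper on the one delicate point---that conditioning on the ``future'' inputs $X_{t-d+2}^{t}$ is harmless precisely because the delay-$d$ encoder makes them functions of $(m,Y^{t-d})$ only.
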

The proof of Theorem \ref{th: delay_as_one} is shown in Section \ref{sec: delayed_as_oneunit}. In Section \ref{sec: methodologies} we introduced two powerful methodologies to compute upper and lower bounds on the capacity of unifilar FSCs with instantaneous feedback. Based on these approaches, we establish computable upper and lower bounds on the capacity of the unifilar FSC with delayed feedback. Specifically, following Theorem \ref{th: delay_as_one}, since the delayed feedback capacity of a unifilar FSC can be computed as the capacity of a new unifilar FSC with instantaneous feedback, the computable bounds from Section \ref{sec: methodologies} can be directly adapted for the case of delayed feedback, just by redefining the channel and then applying the bounds on the new unifilar FSC. It is important to note that the cardinality of the new channel state is $|\tilde{\mathcal{S}}|=|\mathcal{S}|\cdot|\mathcal{X}|^{d-1}$, while the cardinality of the channel state in the original channel is $|\mathcal{S}|$. That is, we pay the price of having a larger channel state space as a result of the delay. When $d$ grows to infinity, the feedforwad capacity is considered. However, performing asymptotic analysis to obtain bounds on the feedforward capacity becomes intractable since the channel state cardinality of the transformed channel explodes.

\begin{figure}[t]
    \centering
    \psfrag{B}[][][1]{$1$}
    \psfrag{C}[][][1]{$0$}
    \psfrag{D}[][][1]{$1$}
    \psfrag{E}[][][1]{$0$}
    \psfrag{F}[][][1]{$1$}
    \psfrag{G}[][][1]{$1$}
    \psfrag{J}[][][1]{$x_{t+2}$}
    \psfrag{K}[][][1]{$x_{t+1}$}
    \psfrag{L}[][][1]{$x_t$}
    \psfrag{M}[][][1]{$s_{t-1}$}
    \psfrag{N}[][][1]{$y_{t-1}$}
    \psfrag{O}[][][1]{$y_{t-2}$}
    \psfrag{Q}[][][1]{Channel}
    \includegraphics[scale = 1.1]{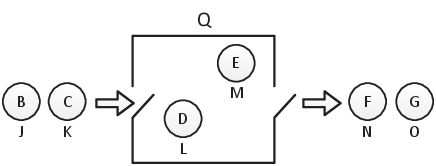}
    \caption{The trapdoor channel. The channel can be viewed as a box in which at time $t$ a labelled ball $s_{t-1}$ (channel state) lies. Then, a new ball $x_t$ (channel input) is inserted into the box, and the channel output $y_t$ is chosen with equal probability as either $s_{t-1}$ or $x_t$. The remaining ball in the box (either $s_{t-1}$ or $x_t$) is now called $s_t$ and serves as the channel state for the next time-instance.}
    \label{fig:TrapdoorChannel}
\end{figure}
In Sections \ref{sec: trapdoor} and \ref{sec: upper_bounds}, we present novel results concerning the capacity (with and without feedback) of several well-known FSCs, by investigating their delayed feedback capacity. In particular, in the following theorems, we present the main results along with some brief remarks. The first theorem below regards the capacity of the trapdoor channel, which is described in Fig. \ref{fig:TrapdoorChannel}.
\begin{theorem} \label{theorem: Trapdoor}
The capacity of the trapdoor channel with delayed feedback of two time instances is
\begin{align*} 
\mathrm{C}^{\mathrm{fb}}_\mathrm{2} = \log_2\left(\frac{3}{2}\right). 
\end{align*}
\end{theorem}
The proof of Theorem \ref{theorem: Trapdoor} is given in Appendix \ref{app: trapdoor}. A detailed discussion regarding the result is given in Section \ref{sec: trapdoor}. Its main implications are:
\begin{enumerate}
    \item There is a simple coding scheme that is based on the posterior matching scheme \cite{OronBasharfeedback}, which achieves the capacity result in Theorem \ref{theorem: Trapdoor}.
    \item We show that our capacity result is not equal to the feedforward capacity. That is, computer-based simulations with a greater delay of the feedback suggest that the feedforward capacity satisfies $C\le 0.5765$, while it is known to be lower bounded by $0.572$ \cite{Kobayashi02}. 
\end{enumerate}

In the theorem below we present the second result concerning the capacity of the BSC with a no consecutive ones input constraint.
\begin{theorem}\label{theorem: BSC_UB}
The capacity of the input-constrained BSC($p$) with a $(1,\infty)$-RLL constraint and delayed feedback of two time instances satisfies 
\begin{align*} 
   \mathrm{C}^{\mathrm{fb}}_\mathrm{2}(p)\leq  \min\log_2\left(\frac{p^p\bar{p}^{\bar{p}}a^{(p^3-3p^2+3p-1)}(\bar{b}\bar{c}d)^{(p^3-p^2)}}{(\bar{a}bc)^{(p^3-2p^2+p)}\bar{d}^{p^3}}\right), 
\end{align*}
where the minimum is over all $(a,b,c,d)\in[0,1]^4$ that satisfy:
\begin{align}\label{eq: bsc_cons}
    1 &\leq \frac{a^{(4p^3-12p^2+11p-3)}(\bar{b}d)^{(4p^3-6p^2+2p)}\bar{c}^{(4p^3-4p^2+p)}}{(\bar{a}c)^{(4p^3-8p^2+5p-1)}b^{(4p^3-10p^2+6p-1)}\bar{d}^{(4p^3-2p^2)}}, \nn\\
    1 &\leq \frac{a^{(4p^3-10p^2+8p-2)}(\bar{b}d)^{(4p^3-4p^2+p)}\bar{c}^{(4p^3-2p^2-2p+1)}}{(\bar{a}c)^{(4p^3-6p^2+2p)}b^{(4p^3-8p^2+5p-1)}\bar{d}^{(4p^3-p)}}. 
\end{align}
\end{theorem}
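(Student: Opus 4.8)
The plan is to use the duality-based upper bound of Theorem~\ref{theorem: DUB_Q} rather than the $Q$-graph bound of Theorem~\ref{theorem:Q_UB_delay}, since the remark following Theorem~\ref{Theorem:Bellman} stresses that duality is the technique of choice for analytical (as opposed to merely numerical) bounds. First I would invoke Theorem~\ref{th: delay_as_one} to replace the delay-$2$ feedback problem with an instantaneous-feedback problem for a transformed unifilar FSC. Modelling the input-constrained BSC as a unifilar FSC whose state records the previous input (thereby enforcing the no-consecutive-ones constraint, with a zero-capacity sink reached upon violation), we have $S_{t}=X_t$, so the transformed state is $\tilde{S}_{t-1}=(S_{t-2},X_{t-1})=(X_{t-2},X_{t-1})$, the transformed output is $\tilde{Y}_t=Y_{t-1}$, and $\tilde{X}_t=X_t$. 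The $(1,\infty)$-RLL constraint forbids the pair $(1,1)$, so the effective state alphabet is $\{(0,0),(0,1),(1,0)\}$; the unifilar transition $\tilde{f}$ and the fact that $\tilde{Y}_t$ is independent of $\tilde{X}_t$ given $\tilde{S}_{t-1}$ follow immediately from Theorem~\ref{th: delay_as_one}.

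Next I would fix a small $Q$-graph together with a graph-based test distribution $T_{\tilde{Y}\mid Q}$ parametrised by the four scalars $(a,b,c,d)\in[0,1]^4$ (one free value per relevant node, consistent with a binary transformed output), and instantiate the average-reward MDP of Table~\ref{table: main1}: the MDP state is $(\tilde{s},q)$, the action is $\tilde{x}\in\{0,1\}$ (forced to $\tilde{x}=0$ whenever the last input was $1$), the disturbance is $\tilde{y}$, and the reward is $D\bigl(P_{\tilde{Y}\mid\tilde{X},\tilde{S}}(\cdot\mid\tilde{x},\tilde{s})\,\big\|\,T_{\tilde{Y}\mid Q}(\cdot\mid q)\bigr)$. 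Because the underlying channel is a BSC$(p)$, each such divergence splits as the constant $p\log_2 p+\bar{p}\log_2\bar{p}$ (the negative binary entropy, matching the $p^p\bar{p}^{\bar{p}}$ factor inside the logarithm in the statement) plus $-\log_2$ of the relevant test-distribution values, whose accumulation produces the remaining $a,b,c,d$ factors.

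The heart of the argument is then to solve the Bellman equation~\eqref{eq:Bellman} of Theorem~\ref{Theorem:Bellman} in closed form. I would conjecture a stationary optimal policy together with a bounded value function $h(\cdot)$ carrying one unknown per MDP state, substitute them into~\eqref{eq:Bellman}, and solve the resulting finite linear system for $\rho$. By Theorem~\ref{Theorem:Bellman}, any $(\rho,h)$ satisfying~\eqref{eq:Bellman} certifies $\rho=\rho^\ast$ and hence $\mathrm{C}^{\mathrm{fb}}_\mathrm{2}(p)\le\rho$; collecting terms yields $\rho$ as the logarithm displayed in the theorem. Since delayed feedback can only help, $\mathrm{C}(p)\le\mathrm{C}^{\mathrm{fb}}_\mathrm{2}(p)\le\rho$, so the same expression bounds the feedforward capacity as well. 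The two inequalities in~\eqref{eq: bsc_cons} are precisely the conditions guaranteeing that the conjectured action is the maximiser in~\eqref{eq:Bellman} at the two states where the input is genuinely free (those whose last input was $0$): they force the competing action to yield no larger a value. Finally, since $(a,b,c,d)$ are free parameters of the test distribution, I would minimise the resulting bound over all feasible $(a,b,c,d)$, which produces the $\min$ in the statement.

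The main obstacle will be twofold: guessing the correct $Q$-graph and the parametric form of $T_{\tilde{Y}\mid Q}$ so that~\eqref{eq:Bellman} admits a closed-form solution, and verifying that the conjectured policy is indeed optimal. The latter is where the polynomial-in-$p$ exponents and the constraints~\eqref{eq: bsc_cons} genuinely originate, and checking the maximisation in~\eqref{eq:Bellman} over every off-policy action at every state is the delicate, calculation-heavy step; an ill-chosen test distribution would either fail~\eqref{eq:Bellman} outright or give a loose, non-minimised bound.
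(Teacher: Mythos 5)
Your proposal matches the paper's proof in all essentials: transform the delay-$2$ problem into an instantaneous-feedback unifilar FSC with state $(x_{t-2},x_{t-1})$ via Theorem~\ref{th: delay_as_one}, fix a four-node $Q$-graph with a graph-based test distribution $T(\tilde{y}=0|\underline{q})=[a,b,c,d]$, and certify the displayed $\rho^*$ by exhibiting a value function solving the Bellman equation of Theorem~\ref{Theorem:Bellman}, with the constraints \eqref{eq: bsc_cons} guaranteeing that $\tilde{x}=0$ attains the maximum at the free-input states. This is exactly the route taken in Appendix~\ref{app: BSC}.
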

As shown in the theorem above, we derived an analytical upper bound on the delayed feedback capacity of the BSC with a no consecutive ones input constraint. Although we introduce our bound for the case of feedback delayed by two time instances, it also provides a novel upper bound on the feedforward capacity that outperforms all previously known bounds. Nonetheless, we emphasize that our bound almost coincides with a lower bound on the feedforward capacity. 
The proof of Theorem \ref{theorem: BSC_UB} is given in Appendix \ref{app: BSC}, and a detailed discussion regarding the result is given in Section \ref{sec: BSC}.

\begin{figure}[t]
    \centering
    \psfrag{A}[][][1]{$0$}
    \psfrag{B}[][][1]{$1$}
    \psfrag{C}[][][1]{$?$}
    \psfrag{D}[][][1]{$-1$}
    \psfrag{E}[][][1]{$1-p$}
    \psfrag{F}[][][1]{$p$}
    \psfrag{G}[][][1]{$X_t$}
    \psfrag{H}[][][1]{$Y_t$}
    \psfrag{J}[][][1]{$X_{t-1} = 0$}
    \psfrag{K}[][][1]{$X_{t-1} = 1$}
    \includegraphics[scale = 0.65]{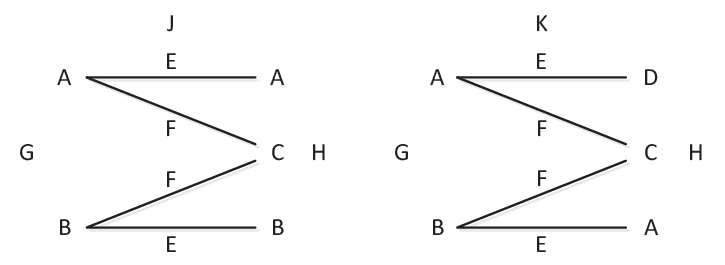}
    \caption{The DEC. The inputs take values from the binary alphabet while the outputs take values in $\mathcal{Y}=\{-1,0,1,?\}$. Given an input $x_t$, the output of the DEC is $y_t=x_t-x_{t-1}$ with probability $1-p$, or $y_t=?$ with probability $p$, where $p \in [0,1]$ is the channel parameter. The channel state is the previous input, i.e. $s_{t-1}=x_{t-1}$.}
    \label{fig: DEC_operation}
\end{figure}
The following theorem presents our result on the capacity of the DEC. The operation of the DEC is described in Fig. \ref{fig: DEC_operation}.
\begin{theorem}\label{theorem: DEC}
Feedback increases the capacity of the DEC.
\end{theorem}
In the theorem above we state that the feedback capacity of the DEC, which was derived in \cite{Sabag_UB_IT}, is not equal to its feedforward capacity. In \cite{Huleihel_Sabag_DB}, the authors investigated the feedforward capacity of the DEC, and they derived an upper bound that was equal exactly to the feedback capacity, which is known to be almost tight to a lower bound on the feedforward capacity. This fact raised the question of whether the feedback capacity is equal to the feedforward capacity. By investigating the delayed feedback capacity of the DEC, we showed that this is not the case. Further details regarding the statement of Theorem \ref{theorem: DEC} are given in Section \ref{sec:DEC}.

\begin{remark}
As will be demonstrated in Sections \ref{sec: trapdoor} and \ref{sec: upper_bounds}, the role of \emph{instantaneous feedback} is critical in terms of channel capacity. In other words, when considering both the trapdoor channel and the input constrained BSC with feedback, even a single time-instance delay leads to a sharp decrease of the capacity towards the feedforward capacity. 
\end{remark}

\section{Computable Bounds On Delayed Feedback Capacity}\label{sec: delayed_feedback}
In this section, we first show that the delayed feedback capacity problem can be converted into an instantaneous feedback capacity problem, by reformulating the FSC. Consequently, for any unifilar FSC, we then introduce upper and lower bounds on the delayed feedback capacity that are a straightforward extension of the $Q$-graph bounds from Section \ref{sec: method1_qbounds}.
\subsection{Proof of Theorem \ref{th: delay_as_one}: Delayed Feedback Capacity as Instantaneous Feedback Capacity}\label{sec: delayed_as_oneunit}
Here, we show that the capacity of a general FSC with delayed feedback can be computed as the capacity of a new FSC with instantaneous feedback, by redefining the channel state, input, and output. Moreover, if the original channel is a unifilar FSC, then the new reformulated channel is a unifilar FSC as well. 

Formally, given a general FSC, define the new channel state as $\tilde{S}_{t-1}=\left(S_{t-d},X_{t-d+1}^{t-1}\right)$, let the channel output be $\tilde{Y}_t=Y_{t-d+1}$, and let the channel input remain the same, i.e. $\tilde{X}_t=X_t$. In the following, we show that the new channel is a FSC such that its capacity with instantaneous feedback is equal to the capacity of the original channel with delayed feedback of $d$ time instances. 
\begin{itemize}
    \item Conditioned on the previous channel state $\tilde{S}_{t-1}$, the channel input $\tilde{X}_t$, and the channel output $\tilde{Y}_t$, the new channel state $\tilde{S}_t$ is independent of any other previous states, inputs, and outputs. That is,
    \begin{align}\label{eq: new_state_law}
        P(\tilde{s}_t|\tilde{x}^t,\tilde{y}^t,\tilde{s}^{t-1}) = P(\tilde{s}_t|\tilde{s}_{t-1},\tilde{x}_t,\tilde{y}_t).
    \end{align}
    This equation holds due to the Markov chain $(S_{t-d+1},X_{t-d+2}^t)-(S_{t-d},X_{t-d+1}^t,Y_{t-d+1})-(X^{t-d},Y^{t-d})$, which follows directly by the Markov chain property of the original channel. In particular, since $\tilde{S}_t=(S_{t-d+1},X_{t-d+2}^t)$, and since $(\tilde{S}_{t-1},\tilde{X}_t,\tilde{Y}_t)$ include $(S_{t-d},X_{t-d+1}^t,Y_{t-d+1})$, \eqref{eq: new_state_law} holds.
    
    In addition, we show below that, if the original channel is a unifilar FSC, then the unfilar property also holds for the new induced channel. That is, we show that the new channel state $\tilde{s}_t$ is a time-invariant function of $\tilde{s}_{t-1}$, $\tilde{x}_t$, and $\tilde{y}_t$:
    \begin{align*}
        \tilde{s}_t &= \left(s_{t-d+1},x_{t-d+2}^{t}\right)\nn\\
                    &= \left(f\left(s_{t-d},x_{t-d+1},y_{t-d+1}\right),x_{t-d+2}^{t}\right)\nn\\
                    &\triangleq \tilde{f}\left(\tilde{s}_{t-1}, \tilde{x}_t, \tilde{y}_t\right),
    \end{align*}
    where, clearly, $\tilde{f}:\tilde{\mathcal X}\times \tilde{\mathcal{Y}}\times\tilde{\mathcal{S}}\to \tilde{\mathcal S}$ is a time-invariant function of $(\tilde{s}_{t-1}, \tilde{x}_t, \tilde{y}_t)$. Thus, the unifilar property holds in this case.
    \item Conditioned on the previous channel state $\tilde{S}_{t-1}$ and the channel input $\tilde{X}_t$, the channel output $\tilde{Y}_t$ is independent of any other previous states, inputs, and outputs. Specifically, note that $\tilde{s}_{t-1}$ includes the pair $(s_{t-d},x_{t-d+1})$, and therefore it implies that
    \begin{align*}
        P(\tilde{y}_t|\tilde{x}^t,\tilde{y}^{t-1},\tilde{s}^{t-1}) = P(\tilde{y}_t|\tilde{x}_t,\tilde{s}_{t-1}),
    \end{align*}
    due to the fact that the redefined channel output $\tilde{y}_t$ is the original channel output at time $(t-d+1)$.
    \item The initial state $\tilde{S}_0$ is known both to the encoder and to the decoder, as required. 
    As shown above, the redefined channel is a FSC. Additionally, at each time-step $t$, the encoder knows all previous channel outputs $\tilde{y}^{t-1}$, as required in the case of instantaneous feedback.
\end{itemize}
Now, given an input sequence, the corresponding outputs of the redefined FSC are drawn according to the statistics of the original channel model. Furthermore, maximizing over $P(\tilde{x}^n\|\tilde{y}^{n-1})$ is equivalent to maximizing over $P(x^n\|y^{n-d})$. Therefore, we can deduce that the capacity of the original channel with delayed feedback can be computed as
\begin{align}
       \mathrm{C}^{\mathrm{fb}}_\mathrm{d} = \lim_{n\to\infty} \max_{P(\tilde{x}^n\|\tilde{y}^{n-1})}\frac{1}{n}I(\tilde{X}^n\rightarrow\tilde{Y}^n),
\end{align}
while reformulating the channel model as described above.

A similar formulation appeared in \cite{Yang05, Sabag_Kostina_Gaussian}, but only for the case where the channel state $s_t$ is a deterministic function of $s_{t-1}$ and $x_t$. Here, we presented a general formulation that holds for any FSC. The trapdoor channel, for instance, does not fall into the framework of \cite{Yang05} and \cite{Sabag_Kostina_Gaussian} since the channel state depends on the channel outputs as well.
\begin{remark}
Following our formulation, it is interesting to observe that the channel output $\tilde{Y}_t$ is independent of $\tilde{X}_t$ conditioned on $\tilde{S}_{t-1}$. In other words, the channel output solely depends on the channel state and not on the channel input. However, the choice of the channel input $\tilde{x}_t$ is still of significant importance since it directly affects the evolution of the next channel state.
\end{remark}

\subsection{$Q$-graph Bounds on Delayed Feedback Capacity}\label{sec: delayed_new_formulas}
In this section, we present upper and lower bounds on the delayed feedback capacity of unifilar FSCs. These bounds are based on the $Q$-graph bounds from Section \ref{sec: method1_qbounds}, which were introduced for the case of instantaneous feedback. In particular, to compute bounds on the delayed feedback capacity of a given unifilar FSC, we first reformulate the FSC (according to the formulation in Section \ref{sec: delayed_as_oneunit}) to obtain an equivalent instantaneous feedback capacity problem of a new uniflar FSC. Then, given the new unifilar FSC, the $Q$-graph bounds can be directly applied. We emphasize that a delay of at least two time instances is assumed here. Otherwise, we have the standard instantaneous feedback scenario. In the following theorem, we present the $Q$-graph upper bound for the case of delayed feedback.
\begin{theorem}\label{theorem:Q_UB_delay}
The $d$ time instances delayed feedback capacity of a strongly connected unifilar FSC, where the initial state is available to both the encoder and the decoder, is bounded by
\begin{align}\label{eq:Theorem_Upper}
\mathrm{C}^{\mathrm{fb}}_\mathrm{d}\leq \sup_{P_{\tilde{X}|\tilde{S},Q}\in\mathcal{P}_{\pi}}I(\tilde{S};\tilde{Y}|Q),
\end{align}
where $\tilde{X}$, $\tilde{Y}$, $\tilde{S}$ are the new channel input, output, and state, respectively (as defined in Section \ref{sec: delayed_as_oneunit}). The bound holds for all $Q$-graphs for which the $(\tilde{S},Q)$-graph has a single and aperiodic closed communicating class. The joint distribution is $P_{\tilde{Y},\tilde{X},\tilde{S},Q}=P_{\tilde{Y}|\tilde{X},\tilde{S}}P_{\tilde{X}|\tilde{S},Q}\pi_{\tilde{S},Q}$, where $\pi_{\tilde{S},Q}$ is the stationary distribution of the $(\tilde{S},Q)$-graph.
\end{theorem}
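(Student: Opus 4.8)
The plan is to obtain this upper bound as an immediate corollary of the reduction established in Theorem~\ref{th: delay_as_one} combined with the instantaneous-feedback $Q$-graph upper bound of Theorem~\ref{theorem:Q_UB}. Since Theorem~\ref{th: delay_as_one} shows that the $d$-delay feedback capacity of the original unifilar FSC equals the instantaneous feedback capacity of the transformed unifilar FSC with kernel $P^d_{\tilde{Y},\tilde{S}^+|\tilde{X},\tilde{S}}$, and since that transformed channel is again a unifilar FSC, I would apply Theorem~\ref{theorem:Q_UB} directly to the transformed channel with $(\tilde{X},\tilde{S},\tilde{Y})$ playing the roles of $(X,S,Y)$. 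This yields
\begin{align*}
\mathrm{C}^{\mathrm{fb}}_\mathrm{d}\leq \sup_{P_{\tilde{X}|\tilde{S},Q}\in\mathcal{P}_{\pi}}I(\tilde{X},\tilde{S};\tilde{Y}|Q),
\end{align*}
for every $Q$-graph whose induced $(\tilde{S},Q)$-graph has a single aperiodic closed communicating class, with the stated factorization of the joint law. The only remaining work is to reconcile the term $I(\tilde{X},\tilde{S};\tilde{Y}|Q)$ obtained here with the term $I(\tilde{S};\tilde{Y}|Q)$ appearing in the statement.

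The reconciliation rests on the structural observation recorded in the remark following Theorem~\ref{th: delay_as_one}: because $\tilde{Y}_t = Y_{t-d+1}$ while the new state $\tilde{S}_{t-1}=(S_{t-d},X_{t-d+1}^{t-1})$ already contains both $S_{t-d}$ and $X_{t-d+1}$, the transformed channel satisfies $P_{\tilde{Y}|\tilde{X},\tilde{S}} = P_{\tilde{Y}|\tilde{S}}$; that is, the output is conditionally independent of the input given the state. I would exploit this to write $H(\tilde{Y}|\tilde{X},\tilde{S},Q)=H(\tilde{Y}|\tilde{S})=H(\tilde{Y}|\tilde{S},Q)$, where the second equality holds because $P_{\tilde{Y}|\tilde{S}}$ does not depend on $Q$. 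Consequently,
\begin{align*}
I(\tilde{X},\tilde{S};\tilde{Y}|Q)= H(\tilde{Y}|Q)-H(\tilde{Y}|\tilde{X},\tilde{S},Q) = H(\tilde{Y}|Q)-H(\tilde{Y}|\tilde{S},Q) = I(\tilde{S};\tilde{Y}|Q),
\end{align*}
so the two functionals coincide and the stated bound follows.

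The step I expect to demand the most care is not the information-theoretic identity above, which is immediate, but the verification that the hypotheses of Theorem~\ref{theorem:Q_UB} genuinely transfer to the transformed channel — in particular that strong connectivity of the original FSC, together with the single-aperiodic-closed-communicating-class assumption on the $(\tilde{S},Q)$-graph, are the correct conditions guaranteeing validity of the instantaneous-feedback bound for $P^d_{\tilde{Y},\tilde{S}^+|\tilde{X},\tilde{S}}$. Here I would rely on the fact that enlarging the state to $(S_{t-d},X_{t-d+1}^{t-1})$ merely appends a deterministic shift register of past inputs, so that reachability in the transformed state space is inherited from reachability in the original FSC and the supremum over $\mathcal{P}_{\pi}$ remains well-defined with a unique stationary distribution $\pi_{\tilde{S},Q}$. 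Finally, I note that the same simplification $I(\tilde{X},\tilde{S};\tilde{Y}|Q)=I(\tilde{S};\tilde{Y}|Q)$ is precisely what aligns this upper bound with the lower bound of Theorem~\ref{theorem:bcjr_general_delay}, so that both graph-based bounds are expressed through the identical functional of the test input distribution.
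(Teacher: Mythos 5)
Your proposal is correct and follows essentially the same route as the paper: reduce to instantaneous feedback via Theorem~\ref{th: delay_as_one}, apply Theorem~\ref{theorem:Q_UB} to the transformed unifilar FSC, and then collapse $I(\tilde{X},\tilde{S};\tilde{Y}|Q)$ to $I(\tilde{S};\tilde{Y}|Q)$ using the Markov chain $\tilde{Y}-\tilde{S}-\tilde{X}$ (which holds because, for $d\ge 2$, the new state already contains the input $X_{t-d+1}$ that generates $\tilde{Y}_t$). Your extra remarks on transferring the strong-connectivity and aperiodicity hypotheses are a reasonable elaboration of what the paper leaves implicit, not a different argument.
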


Further, the $Q$-graph lower bound for the case of delayed feedback is given in the theorem below.
\begin{theorem}\label{theorem:bcjr_general_delay}
If the initial state $\tilde{s}_0$ is available to both the encoder and the decoder, then the $d$ time instances delayed feedback capacity of a strongly connected unifilar FSC is bounded by
\begin{align}\label{eq:Theorem_Lower}
\mathrm{C}^{\mathrm{fb}}_\mathrm{d}&\geq I(\tilde{S};\tilde{Y}|Q),
\end{align}
where $\tilde{X}$, $\tilde{Y}$, $\tilde{S}$ are the new channel input, output, and state, respectively (as defined in Section \ref{sec: delayed_as_oneunit}). The bound holds only for aperiodic inputs $P_{\tilde{X}|\tilde{S},Q}\in\mathcal{P}_\pi$ that are BCJR-invariant, and for all irreducible $Q$-graphs with $q_0$ such that $(\tilde{s}_0,q_0)$ lies in an aperiodic closed communicating class.
\end{theorem}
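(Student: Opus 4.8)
The plan is to obtain this bound as a direct corollary of the reduction in Theorem \ref{th: delay_as_one} combined with the instantaneous-feedback lower bound of Theorem \ref{theorem:bcjr_general}. First I would invoke Theorem \ref{th: delay_as_one} to replace the original strongly connected unifilar FSC under delayed feedback of $d$ time instances by the transformed FSC $P^d_{\tilde{Y},\tilde{S}^+|\tilde{X},\tilde{S}}$ under instantaneous feedback. By that theorem the transformed channel is again unifilar, with state transition $\tilde{f}$, and its instantaneous feedback capacity coincides with the delayed feedback capacity $\mathrm{C}^{\mathrm{fb}}_\mathrm{d}$ of the original channel. Hence it suffices to lower bound the instantaneous feedback capacity of the transformed channel.

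Next I would apply Theorem \ref{theorem:bcjr_general} verbatim to the transformed unifilar FSC. For any fixed irreducible $Q$-graph and any aperiodic input $P_{\tilde{X}|\tilde{S},Q}\in\mathcal{P}_\pi$ that is BCJR-invariant, with $q_0$ chosen so that $(\tilde{s}_0,q_0)$ lies in an aperiodic closed communicating class, Theorem \ref{theorem:bcjr_general} delivers the lower bound $I(\tilde{X},\tilde{S};\tilde{Y}|Q)$ on the instantaneous feedback capacity of the transformed channel. The aperiodicity and BCJR-invariance hypotheses of the present statement are precisely those required by Theorem \ref{theorem:bcjr_general} once read off the transformed channel, so these conditions need no separate verification.

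The final step is to simplify the mutual information so that the $\tilde{X}$ term drops out. As observed in the remark following Theorem \ref{th: delay_as_one}, the transformed output satisfies the conditional independence $\tilde{Y}_t\perp\tilde{X}_t\mid\tilde{S}_{t-1}$, because the conditional law of $\tilde{Y}_t=Y_{t-d+1}$ given $\tilde{S}_{t-1}$ depends only on the pair $(S_{t-d},X_{t-d+1})$ already contained in $\tilde{S}_{t-1}=(S_{t-d},X_{t-d+1}^{t-1})$. Therefore $I(\tilde{X};\tilde{Y}\mid\tilde{S},Q)=0$, and the chain rule gives
\begin{align}
I(\tilde{X},\tilde{S};\tilde{Y}|Q)=I(\tilde{S};\tilde{Y}|Q)+I(\tilde{X};\tilde{Y}|\tilde{S},Q)=I(\tilde{S};\tilde{Y}|Q).\nonumber
\end{align}
Chaining the three steps yields $\mathrm{C}^{\mathrm{fb}}_\mathrm{d}\geq I(\tilde{S};\tilde{Y}|Q)$, which is exactly the claim.

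The one point that warrants care, and what I expect to be the main obstacle, is confirming that the transformed FSC inherits strong connectivity from the original, since this is a standing hypothesis of Theorem \ref{theorem:bcjr_general} that Theorem \ref{th: delay_as_one} does not itself supply. I would establish it directly: the coordinates $X_{t-d+1}^{t-1}$ of the enlarged state are set freely by the encoder, while the $S$-coordinate evolves according to the original dynamics. Thus, starting from any transformed state, one can use the strong connectivity of the original channel to steer the $S$-coordinate toward any target with positive probability over some horizon, and over the final $d-1$ transitions flush out the old input suffix and load any prescribed suffix $(x_1',\dots,x_{d-1}')$ into the remaining coordinates. Composing these reachability arguments, and exploiting the freedom in the number of steps $T$, shows every transformed state is reachable from every other under a suitable input distribution, so the transformed channel is strongly connected and all hypotheses of Theorem \ref{theorem:bcjr_general} are met.
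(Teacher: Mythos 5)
Your proposal is correct and follows essentially the same route as the paper: invoke the reduction of Theorem \ref{th: delay_as_one}, apply Theorem \ref{theorem:bcjr_general} to the transformed unifilar FSC, and drop $\tilde{X}$ from the mutual information via the Markov chain $\tilde{Y}-\tilde{S}-\tilde{X}$ (which holds because the transformed state already contains the inputs that determine $\tilde{Y}$). Your additional check that the transformed channel inherits strong connectivity is a point the paper leaves implicit, and your sketch of it is sound.
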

\begin{proof}[Proof of Theorems \ref{theorem:Q_UB_delay} and \ref{theorem:bcjr_general_delay}]
First, as shown in Section \ref{sec: delayed_as_oneunit}, after reformulating the FSC, we obtain an equivalent instantaneous feedback capacity problem. Since the new induced channel is a unifilar FSC, Theorems \ref{theorem:Q_UB} and \ref{theorem:bcjr_general} can be directly applied on the new unifilar channel. Accordingly, it is only left to show that it is sufficient to optimize $I(\tilde{S};\tilde{Y}|Q)$ instead of $I(\tilde{X},\tilde{S};\tilde{Y}|Q)$. The latter holds by the trivial Markov chain $\tilde{Y}-\tilde{S}-\tilde{X}$. In particular, according to the new formulation, the new channel state already include the new channel input.
\end{proof}
\section{Trapdoor Channel with Delayed Feedback}\label{sec: trapdoor}
The trapdoor channel (Fig. \ref{fig:TrapdoorChannel}) has had a long history in information theory since its introduction by David Blackwell in $1961$ \cite{Blackwell_trapdoor}. The channel has attracted much interest since its representation is very simple, yet its capacity computation is highly non-trivial. The channel can be viewed as a (causal) permutation channel since the weight of the input sequence is equal to the weight of the output sequence. This channel is also termed the \emph{chemical channel}, which alludes to a physical system in which chemical concentrations are used to communicate \cite{Berger02B}. A detailed discussion on the trapdoor channel can be found in Robert Ash's book \cite{Ash65} (which even uses the channel for his book cover). 

The trapdoor channel is a unifilar FSC whose operation can be described as follows. At time $t$, let $x_t\in\{0,1\}$ be the channel input and $s_{t-1}\in\{0,1\}$ be the previous channel state. The channel input, $x_t$, is transmitted through the channel. The channel output, $y_t$, is equal to the previous state $s_{t-1}$ or to the input $x_t$, with the same probability. The new channel state is evaluated according to $s_t=x_t\oplus y_t\oplus s_{t-1}$, where $\oplus$ denotes the XOR operation.

Despite the simplicity and the extensive research efforts dedicated to trapdoor channel, e.g. \cite{Ahl_kaspi87,Ahlswede98,PermuterCuffVanRoyWeissman08,kobayashi20033,Trapdoor_Lutz,Huleihel_Sabag_DB}, 
its capacity has remained an open problem for over sixty years. Notwithstanding, the capacity is known in two important variations of the original capacity problem.
In \cite{Ahl_kaspi87,Ahlswede98}, it was shown that the zero-error capacity of the trapdoor channel is $C_0 = 0.5$ bits per channel use. This provides a lower bound which is known to be non-tight (e.g. \cite{kobayashi20033}). The other variation is the feedback capacity, which is equal to $\mathrm{C}^{\mathrm{fb}}_\mathrm{1} = \log_2\left(\frac{1+\sqrt{5}}{2}\right)\approx 0.6942$, as shown in \cite{PermuterCuffVanRoyWeissman08}. It is also known that feedback \emph{does increase} the capacity for the trapdoor channel (e.g. \cite{Trapdoor_Lutz,Huleihel_Sabag_DB}). 

In Theorem \ref{theorem: Trapdoor}, we introduced our main result concerning the delayed feedback capacity of the trapdoor channel. Specifically, we stated that the capacity of the trapdoor channel with delayed feedback of two time instances is $\mathrm{C}^{\mathrm{fb}}_\mathrm{2} = \log_2\left(\frac{3}{2}\right)$. In the following, we list several implications of this capacity result. The above capacity is approximately $\mathrm{C}^{\mathrm{fb}}_\mathrm{2} \approx 0.5849$, while the instantaneous feedback capacity is approximately $\mathrm{C}^{\mathrm{fb}}_\mathrm{1} \approx 0.6942$. 
The best lower bound to date on the feedforward capacity is $\mathrm{C}\ge0.572$ \cite{kobayashi20033}. It is interesting to note that even a single time-instance delay leads to a sharp decrease in the capacity towards the feedforward capacity.

The delayed feedback capacity in Theorem \ref{theorem: Trapdoor} also serves as an upper bound on the feedforward capacity. Overall, the best bounds on the feedforward capacity are given by
\begin{align*}
    0.572 \leq C \leq 0.5849.
\end{align*}
While the delayed feedback capacity is equal to the best upper bound on the feedforward capacity, it does not establish a new upper bound. In particular, a recent paper proposed using duality-based upper bounds on the feedforward capacity and established the same bound \cite{Huleihel_Sabag_DB}. However, their bound is for the feedforward capacity only, and therefore we still need to show a converse proof for Theorem \ref{theorem: Trapdoor}.

An interesting question is whether the delayed feedback capacity is, indeed, the feedforward capacity. Simulations of the delayed feedback capacity with a delay greater than two time instances suggest that this is not the case. In particular, by operational considerations, we have the following chain of inequalities:
\begin{align}
    C\le \ldots\le \mathrm{C}^{\mathrm{fb}}_\mathrm{3}\le\mathrm{C}^{\mathrm{fb}}_\mathrm{2}\le\mathrm{C}^{\mathrm{fb}}_\mathrm{1}.
\end{align}
As clarified, the upper bound in Theorem \ref{theorem:Q_UB_delay} can be formulated as a convex optimization problem, and its evaluation for a greater delay of the feedback gives
\begin{align}\label{eq:trapdoor_ubs}
    \mathrm{C}^{\mathrm{fb}}_\mathrm{3} \leq 0.5782, \;\;\;
    \mathrm{C}^{\mathrm{fb}}_\mathrm{4} \leq 0.5765.
\end{align}
Accordingly, these simulations suggest that the feedforward capacity satisfies  $C\le 0.5765$. In other words, the delayed feedback capacity in Theorem \ref{theorem: Trapdoor} does not seem to be the feedforward capacity, which remains an open problem.

\begin{remark}
The achievability proof of Theorem \ref{theorem: Trapdoor} is based on the $Q$-graph lower bound, which was presented in Section \ref{sec: delayed_new_formulas}. That is, the lower bound was established by showing that a particular graph-based encoder, given by the $Q$-graph in \eqref{eq:LB_q_vec} and the input distribution in \eqref{eq: trapdoor_input_dist}, provides an achievable rate of $\log_2(3/2)$. This graph-based encoder implies a simple coding scheme that, in our case, achieves the capacity. As explained in Remark \ref{remark: coding_scheme}, the scheme is based on the posterior matching principle, and the exact details regarding the constriction of the coding scheme are given in \cite{OronBasharfeedback}.
\end{remark}
\begin{remark}
Following the formulation in Section \ref{sec: delayed_as_oneunit}, we present in Fig. \ref{fig:TrapdoorChannel_delay} the trapdoor channel with delayed feedback of two time instances as a new unifilar FSC with instantaneous feedback. For the new FSC, it is interesting to note that the capacity of each individual channel (per state) is zero. Nevertheless, as we already demonstrated, the capacity of the overall FSC is not zero. This follows due to the fact that, at each time $t$, the output depends only on the previous channel state, and the choice of the current input will only participate in the evolution of the next channel state.
\end{remark}

\section{Upper Bounds on Feedforward Capacity}\label{sec: upper_bounds}
In this section, we demonstrate that the investigation of the delayed feedback capacity plays an important role in deriving upper bounds on the feedforward capacity. Specifically, besides the trapdoor channel, we present here two additional FSCs for which we derive novel results concerning their feedforward capacity by investigating their delayed feedback capacity.

\subsection{Input-Constrained BSC}\label{sec: BSC}
Regardless of whether feedback is allowed or not, memoryless channels have the same simple single-letter capacity expression \cite{shannon56}. When the inputs are constrained, however, the capacity problem is very challenging. The feedforward capacity in the presence of constrained inputs has been extensively investigated, e.g. \cite{vontobel_generalization,han_constrained_BSC_BEC,wolf_RLL,han_RLL_BSC,yonglong_isit_erasure}, but is still given by a multi-letter expression. 

\begin{figure}[t]
    \centering
    \psfrag{A}[][][0.65]{$0$}
    \psfrag{B}[][][0.65]{$1$}
    \psfrag{C}[][][0.65]{$\;0.5$}
    \psfrag{D}[][][0.7]{$\tilde{s}_{t-1}=(0,0)$}
    \psfrag{E}[][][0.7]{$\tilde{s}_{t-1}=(0,1)$}
    \psfrag{F}[][][0.7]{$\tilde{s}_{t-1}=(1,0)$}
    \psfrag{G}[][][0.7]{$\tilde{s}_{t-1}=(1,1)$}
    \psfrag{J}[][][0.7]{$\tilde{x}_t$}
    \psfrag{K}[][][0.7]{$\tilde{y}_t$}
    \includegraphics[scale = 0.47]{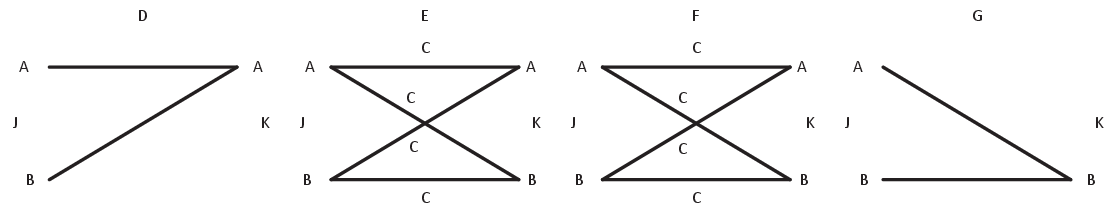}
    \caption{The trapdoor channel with delayed feedback of two time instances as a new unifilar FSC with instantaneous feedback.}
    \label{fig:TrapdoorChannel_delay}
\end{figure}
Here, we consider the BSC with crossover probability $p$, denoted by BSC($p$), where the inputs are constrained to satisfy the $(1,\infty)$-RLL constraint. Namely, the input sequence does not contain two consecutive ones. Even though this setting does not fall under the classical definition of a unifilar FSC, it is straightforward to convert input constraints by defining a dummy sink state whose capacity is zero in the case that the constraint is violated. 
For this setting, while the feedforwad capacity is still open, the feedback capacity was established in \cite{Sabag_BIBO}, and it is known that feedback does increase the capacity \cite{Sabag_BIBO,Dual_Andrew_J}. 

\begin{figure}[t]
    \centering
    \includegraphics[scale = 0.42]{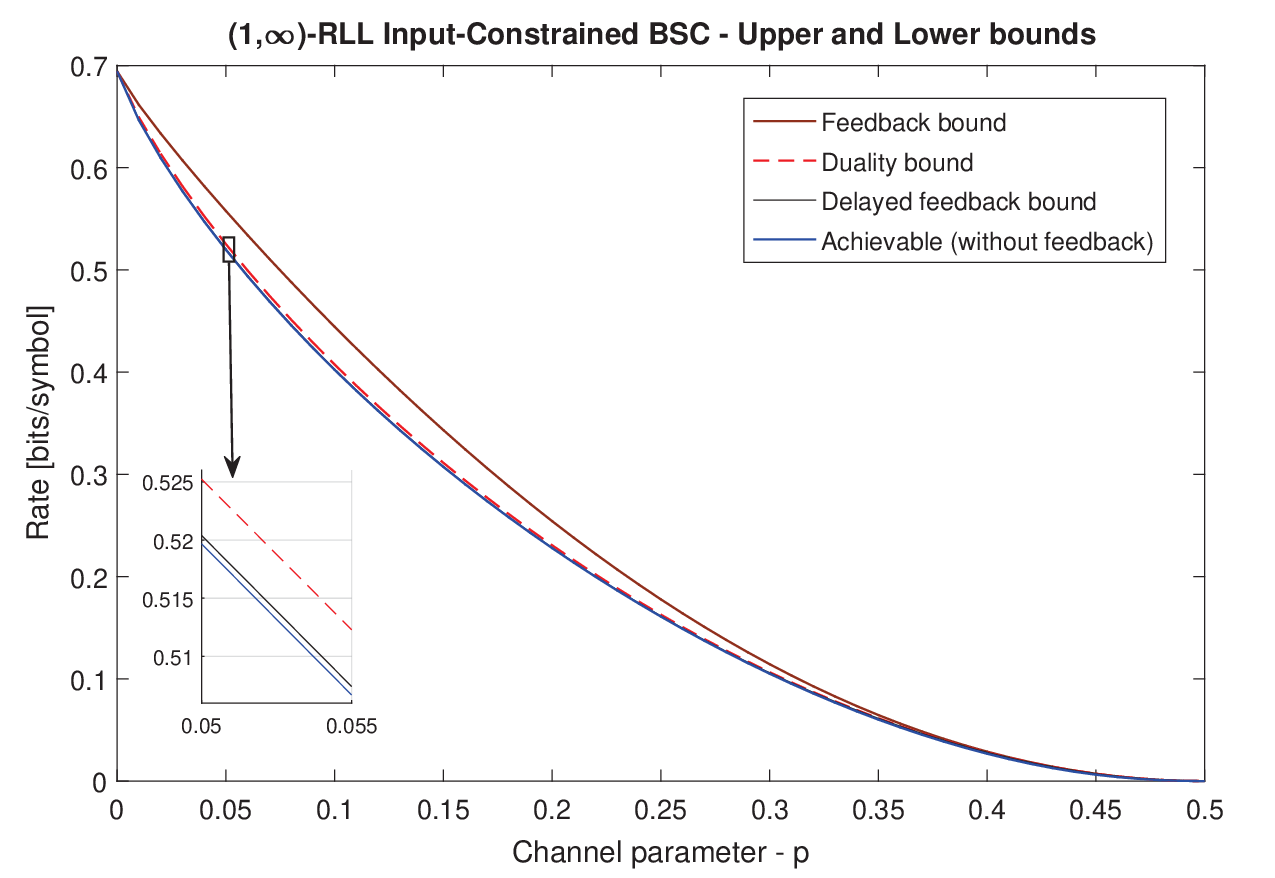}
    \caption{Upper and lower bounds on the feedforward capacity of the input-constrained BSC($p$). The feedback bound is the feedback capacity from \cite{Sabag_BIBO}. The duality bound (dashed red line) is the best upper bound from the literature \cite{Dual_Andrew_J}. The black line represents our upper bound in Theorem \ref{theorem: BSC_UB}. Finally, the bounds are compared to a lower bound on the feedforward capacity (blue line).}
    \label{fig:BSC_bounds}
\end{figure}

In Theorem \ref{theorem: BSC_UB}, we introduced our upper bound on the two time instances delayed feedback capacity of the BSC($p$) with $(1,\infty)$-RLL constraint. In Fig. \ref{fig:BSC_bounds}, this upper bound is plotted along with the feedback capacity from \cite{Sabag_BIBO}, the best upper bound on the feedforward capacity from \cite{Dual_Andrew_J}, and a lower bound on the feedforward capacity obtained using the simulation method in \cite{Loeliger_FSCs}. Here as well, it is surprising to note from the plot that the difference between the capacity with instantaneous feedback and the capacity with an additional time-instance delay is quite significant. Further, although our upper bound was introduced for the case of delayed feedback, it also serves as a novel upper bound on the feedforward capacity, and outperforms all previously known bounds. Nonetheless, our upper bound almost coincides with the achievable rate (a lower bound on the feedforward capacity).

We would like to emphasize that our upper bound can be further improved. Specifically, evaluating the upper bound in Theorem \ref{theorem:Q_UB_delay} with a $3$-order Markov $Q$-graph provides an even tighter upper bound on the delayed feedback capacity of two time instances. However, the upper bound in Theorem \ref{theorem: BSC_UB} already achieves remarkable performance, and therefore we do not provide here an analytical expression for the additional bound.

\subsection{The Dicode Erasure Channel} \label{sec:DEC}
The DEC was studied in \cite{PfitserLDPC_memory_erasure, henry_dissertation, Sabag_UB_IT,Huleihel_Sabag_DB} and is a simplified version of the well-known dicode channel with additive white Gaussian noise. The operation of the DEC is illustrated in Fig. \ref{fig: DEC_operation}. The feedback capacity of the DEC was established in \cite{Sabag_UB_IT}, and is given in the theorem below. However, in the absence of feedback the capacity is still unknown.
\begin{theorem}[\!\cite{Sabag_UB_IT}, Th. 5] \label{theorem: DEC_fb}
The feedback capacity of the DEC is
\begin{align} 
    \mathrm{C}^{\mathrm{fb}}_\mathrm{1}(p) = \max_{\epsilon\in[0,1]} (1-p)\frac{\epsilon+p H_2(\epsilon)}{p+(1-p)\epsilon}
\end{align}
for any channel parameter $p\in[0,1]$.
\end{theorem}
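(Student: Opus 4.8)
The plan is to establish the formula by sandwiching the feedback capacity between matching instances of the $Q$-graph upper bound (Theorem~\ref{theorem:Q_UB}) and the $Q$-graph lower bound (Theorem~\ref{theorem:bcjr_general}), both evaluated on a single carefully chosen $Q$-graph. First I would record the unifilar description of the DEC: the state is $s_{t-1}=x_{t-1}$, the update is the deterministic map $f(s,x,y)=x$, and the channel law is $P(?\mid x,s)=p$ together with $P(x-s\mid x,s)=\bar p$. The central observation driving the entire computation is that, conditioned on a non-erasure, the output $y_t$ reveals the \emph{flip} $x_t\oplus x_{t-1}$ (through $y_t=0$ versus $y_t=\pm1$), and moreover a non-erased flip ($y_t=\pm1$) \emph{additionally} reveals the absolute value of $s_{t-1}$, i.e.\ it re-synchronizes the decoder's knowledge of the state. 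Hence, with feedback, the natural sufficient statistic is the decoder's posterior (belief) on the binary state, and the parameter $\epsilon$ in the claimed expression should be read as the probability with which the optimal symmetric input flips the state.

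For the achievability (lower) bound I would design a $Q$-graph whose nodes record the synchronization status of the decoder --- essentially whether the current state is known exactly or only through a non-degenerate posterior --- together with enough of the recent output history that the induced stationary posterior on $s$ takes only finitely many values. On this graph I would propose the symmetric input distribution parametrized by the flip probability $\epsilon$, and verify that it is BCJR-invariant by checking the fixed-point identity~\eqref{eq: BCJR} for every triple $(s^+,q,y)$. Evaluating $I(X,S;Y\mid Q)=H(Y\mid Q)-H(Y\mid X,S)$ then reduces to a finite calculation: the conditional term is the erasure entropy, while the stationary weights of the synchronized versus desynchronized nodes supply the normalization $p+\bar p\,\epsilon$ and the per-node output entropies combine to give a numerator $\bar p\,[\epsilon+pH_2(\epsilon)]$, yielding $\bar p\,\frac{\epsilon+pH_2(\epsilon)}{p+\bar p\,\epsilon}$. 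Maximizing over $\epsilon\in[0,1]$ gives the claimed achievable rate, and by Remark~\ref{remark: coding_scheme} the same graph-based encoder furnishes an explicit posterior-matching coding scheme attaining it.

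For the converse (upper) bound I would evaluate Theorem~\ref{theorem:Q_UB} on the \emph{same} $Q$-graph, i.e.\ maximize $I(X,S;Y\mid Q)$ over all $P_{X\mid S,Q}\in\mathcal P_\pi$. Exploiting the $0\leftrightarrow1$ symmetry of the channel (relabeling inputs sends $y\mapsto-y$ and preserves the graph) collapses the free parameters to the single symmetric flip probability, and concavity of the objective together with the stationarity constraint lets me solve the maximization in closed form, showing that the supremum coincides with the same one-parameter expression maximized over $\epsilon$. Matching the two bounds then pins the feedback capacity exactly.

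The main obstacle is the middle step: identifying a finite $Q$-graph for which the stationary posterior on the state is a genuine sufficient statistic. Because an erasure contracts the belief toward the uniform distribution while a non-erased flip resets it to a point mass, a naive two-node (synchronized/desynchronized) graph produces, by symmetry, a uniform posterior in the desynchronized node and hence only a loose bound (indeed one that exceeds $1$ as $p\to0$); the correct graph must retain directional information so that the desynchronized posterior is asymmetric and self-consistent under~\eqref{eq: BCJR}. Getting this graph right --- and thereby guaranteeing that the belief process induced by the $\epsilon$-parametrized input visits only finitely many values, so that both the upper and lower bounds evaluate to the same closed form --- is where the real work lies; once it is in place, the remaining entropy bookkeeping and the one-dimensional maximization over $\epsilon$ are routine.
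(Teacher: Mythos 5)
This theorem is not proved in the paper at all; it is imported verbatim from \cite{Sabag_UB_IT} (Theorem 5 there), so there is no in-paper argument to compare against. That said, your outline does follow the methodology by which the result was originally obtained --- sandwiching $\mathrm{C}^{\mathrm{fb}}_1$ between the $Q$-graph upper bound of Theorem \ref{theorem:Q_UB} and the BCJR-invariant lower bound of Theorem \ref{theorem:bcjr_general} on a graph that tracks the decoder's knowledge of the state --- and your reading of the channel (a non-erased $\pm1$ resynchronizes the state, $y=0$ propagates the current belief, an erasure degrades it) is the right structural picture, as is the interpretation of $\epsilon$ as a flip probability and of the denominator $p+\bar p\epsilon$ as a stationary normalization.

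The gap is that the proposal defers exactly the steps that constitute the proof. You never exhibit the $Q$-graph, the input distribution $P_{X|S,Q}$, the stationary distribution $\pi_{S,Q}$, or the verification of \eqref{eq: BCJR}, and you explicitly flag the construction of a finite graph on which the belief process closes as ``where the real work lies'' --- but that is the theorem; without it nothing is established, and it is not a priori obvious that a finite graph yielding a \emph{tight} converse exists (for other unifilar FSCs the $Q$-graph upper bound is strictly loose for every finite graph tried). Two further points would fail as written. First, in the converse you invoke ``concavity of the objective'' to reduce the maximization over all $P_{X|S,Q}\in\mathcal P_\pi$ to the symmetric one-parameter family; but $I(X,S;Y|Q)$ evaluated at the induced stationary distribution is not concave in $P_{X|S,Q}$ (the stationary law depends nonlinearly on the transition kernel), so the symmetrization/averaging step needs the convex reformulation of \cite{OronBasharfeedback} in different variables, or an explicit KKT verification. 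Second, even granting the reduction, you must show the resulting single-variable upper bound equals $\max_\epsilon \bar p\,\frac{\epsilon+pH_2(\epsilon)}{p+\bar p\epsilon}$ rather than merely dominating it; this is a concrete computation, not bookkeeping. As it stands the proposal is a correct research plan, not a proof.
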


In the following theorem, we derive an upper bound on the delayed feedback capacity of the DEC for $p=0.5$. This bound serves as a novel upper bound on the feedforward capacity, and it also demonstrates that feedback does increase the capacity of the DEC for $p=0.5$ (as stated earlier in Theorem \ref{theorem: DEC}).
\begin{theorem}\label{theorem: DEC_UB}
The capacity of the DEC with delayed feedback of two time instances is upper bounded by
\begin{align*}
    &\mathrm{C}^{\mathrm{fb}}_\mathrm{2}(0.5) \leq\max_{a\in(0,0.5)}\nn\\& \frac{1}{4}\log_2\left(\frac{2-3a}{(1-2a)\cdot\left(1+8a^2\bar{a}-3a-(1-4a\bar{a})\sqrt{1+4a^3}\right)}\right)\hspace{-0.1cm}.
\end{align*}
\end{theorem}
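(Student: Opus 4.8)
The plan is to chain the delayed-to-instantaneous reduction of Theorem~\ref{th: delay_as_one} with the graph-based upper bound of Theorem~\ref{theorem:Q_UB_delay}, specialized to $p=1/2$. First I would instantiate the transformation of Section~\ref{sec: delayed_as_oneunit} with $d=2$. Since the DEC state is $s_{t-1}=x_{t-1}$, the transformed state is the bit-pair $\tilde S_{t-1}=(X_{t-2},X_{t-1})\in\{0,1\}^2$, the transformed output is $\tilde Y_t=Y_{t-1}$, and $\tilde X_t=X_t$. As observed in the remark after Theorem~\ref{th: delay_as_one}, the transformed output is a function of the state alone: given $\tilde S_{t-1}=(x_{t-2},x_{t-1})$ one has $\tilde Y_t=x_{t-1}-x_{t-2}$ with probability $1-p=1/2$ and $\tilde Y_t=\,?$ with probability $p=1/2$, irrespective of $\tilde X_t$, while $\tilde X_t$ only sets the second coordinate of the shifted next state $\tilde S_t=(x_{t-1},x_t)$. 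This produces a four-state unifilar FSC with instantaneous feedback, so Theorem~\ref{theorem:Q_UB_delay} applies and gives, for any $Q$-graph, $\mathrm{C}^{\mathrm{fb}}_\mathrm{2}(1/2)\le \sup_{P_{\tilde X|\tilde S,Q}\in\mathcal{P}_\pi}I(\tilde S;\tilde Y|Q)$.

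Next I would fix a small $Q$-graph tailored to this transformed channel and evaluate the supremum in closed form. Because $\tilde Y$ depends only on $\tilde S$, the objective splits as $I(\tilde S;\tilde Y|Q)=H(\tilde Y|Q)-H(\tilde Y|\tilde S)$, and for every state the conditional law of $\tilde Y$ is binary (the deterministic value versus $?$) with weights $1-p$ and $p$, so $H(\tilde Y|\tilde S)=H_2(p)$, which equals the constant $1$ bit at $p=1/2$. Maximizing the mutual information therefore reduces to maximizing $H(\tilde Y|Q)$ over input distributions whose induced $(\tilde S,Q)$-chain is irreducible and aperiodic. Since the dependence on the input enters only through the stationary law $\pi_{\tilde S,Q}$, the optimization is effectively over the achievable stationary distributions of the controlled chain, and I would parametrize this family by a single scalar $a$ left free after imposing the balance equations and the channel's symmetry, so that the supremum collapses to a one-dimensional maximization $\max_a$ whose value is $H(\tilde Y|Q)-1$ written in terms of the stationary probabilities.

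The crux is solving the stationarity conditions in closed form. Writing the balance equations for the $(\tilde S,Q)$-graph and eliminating all but one variable, I expect the remaining stationary probability (equivalently, the optimal auxiliary likelihood) to satisfy a quadratic whose admissible root is exactly the factor $1+8a^2\bar a-3a-(1-4a\bar a)\sqrt{1+4a^3}$ appearing in the denominator, which is why $\sqrt{1+4a^3}$ surfaces in the statement; substituting it back into $H(\tilde Y|Q)$ and simplifying should reproduce the displayed logarithm $\frac14\log_2\!\bigl(\tfrac{2-3a}{(1-2a)(\,\cdot\,)}\bigr)$. The main obstacle is twofold: selecting a $Q$-graph for which the supremum is both closed-form and tight enough to fall below the feedback capacity of Theorem~\ref{theorem: DEC_fb} (so that the separation between feedforward and feedback is actually established); and verifying that the one-parameter family attains the \emph{full} supremum over $P_{\tilde X|\tilde S,Q}$ rather than a strict sub-family, which I would justify from the convex-optimization formulation of the graph bound (Remark~\ref{remark: coding_scheme}) together with the input/output symmetry of the transformed DEC.

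Finally, I would record the feasibility range $a\in(0,1/2)$, which should emerge from requiring the parametrized stationary law to be a genuine probability distribution and the $(\tilde S,Q)$-graph to possess a single aperiodic closed communicating class, as demanded by Theorem~\ref{theorem:Q_UB_delay}. Because the graph bound is a supremum over input distributions and the one-parameter family exhausts the relevant optimizers, the bound equals the maximum of the reduced objective over this range, yielding the claimed $\max_{a\in(0,0.5)}$ expression.
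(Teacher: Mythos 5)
Your overall architecture (transform via Theorem~\ref{th: delay_as_one} with $d=2$, then bound the instantaneous-feedback capacity of the four-state transformed DEC) matches the paper, and your observations that $\tilde S_{t-1}=(X_{t-2},X_{t-1})$, that $\tilde Y_t$ depends on the state alone, and that $H(\tilde Y|\tilde S)=H_2(p)=1$ at $p=1/2$ are all correct. However, you then take the \emph{primal} $Q$-graph bound of Theorem~\ref{theorem:Q_UB_delay} and must evaluate $\sup_{P_{\tilde X|\tilde S,Q}}I(\tilde S;\tilde Y|Q)$ exactly, whereas the paper deliberately avoids this: its proof uses the \emph{duality-based} bound of Theorem~\ref{theorem: DUB_Q}, fixes an eight-node $Q$-graph together with an explicit graph-based test distribution $T_{\tilde Y|Q}$ parametrized by $(a,\gamma_1,\gamma_2)$, and verifies the Bellman equation of Theorem~\ref{Theorem:Bellman} with a guessed pair $(\rho^*,h)$. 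In that route the only work is a finite linear-algebra check; the square roots arise from forcing the Bellman right-hand side to be action-independent at certain MDP states, which pins $\gamma_1,\gamma_2$ as functions of $a$, and the final $\max_{a}$ is a minimization of the dual bound over the remaining free parameter.

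The genuine gap in your proposal is the step where you claim the supremum over all $P_{\tilde X|\tilde S,Q}\in\mathcal{P}_\pi$ ``collapses to a one-dimensional maximization'' over a single scalar $a$ and that this family ``exhausts the relevant optimizers.'' To make the primal bound an actual upper bound you must certify global optimality of your restricted family, which means verifying the KKT conditions of the convex program for the chosen $Q$-graph --- the input distribution has on the order of $|\tilde{\mathcal S}|\cdot|\mathcal Q|$ free coordinates, and symmetry alone does not reduce it to one. An invocation of ``input/output symmetry'' and Remark~\ref{remark: coding_scheme} does not substitute for this certificate; without it you have only a lower bound on the supremum, which is useless as an upper bound on capacity. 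You also never specify the $Q$-graph, and the result depends on a particular eight-node graph (the two-node or Markov graphs do not give this expression). Either carry out the KKT verification for an explicit graph, or switch to the dual/Bellman route, where a guessed value function plus a finite set of linear equations suffices --- this is exactly why the paper states that analytical bounds are obtained via duality rather than via the primal $Q$-graph optimization.
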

The proof of Theorem \ref{theorem: DEC_UB} is given in Appendix \ref{app: DEC}. The upper bound is derived by using a particular $Q$-graph with eight nodes, which is given within the proof of the theorem.

\begin{figure}[t]
    \centering
    \includegraphics[scale = 0.43]{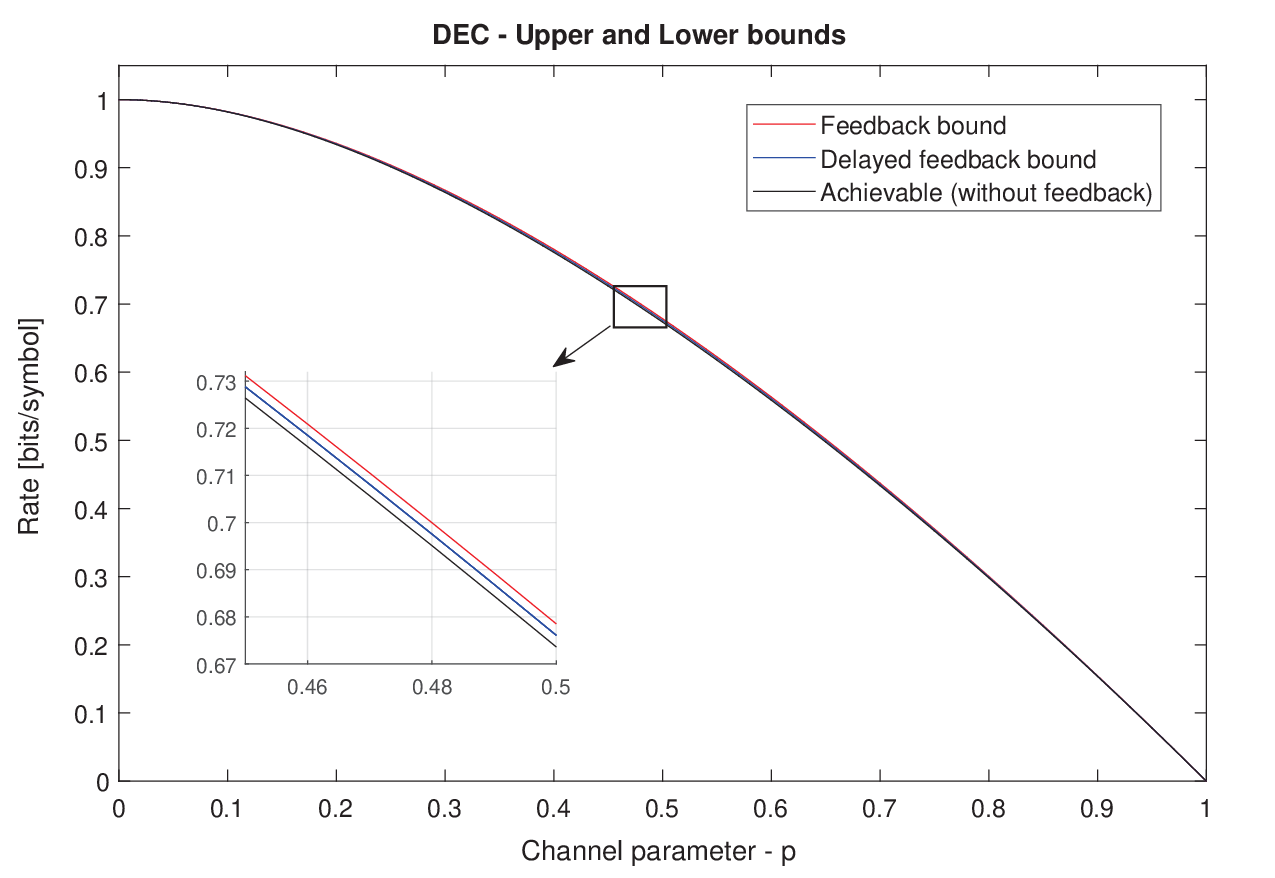}
    \caption{Upper and lower bound on the capacity of the DEC. The feedback bound is the feedback capacity from \cite{Sabag_UB_IT}. The delayed feedback bound (blue line) is our upper bound. The black line is an achievable rate on the feedforward capacity.}
    \label{fig:DEC_bounds}
\end{figure}
In Fig. \ref{fig:DEC_bounds}, we present bounds on the feedforward capacity of the DEC. In particular, the red line is the feedback capacity from Theorem \ref{theorem: DEC_fb}, which serves as a non-trivial upper bound. The black line is an achievable rate from \cite{henry_thesis_ch4}, obtained by considering a first-order Markov input process. Finally, the blue line shows our upper bound on the two time instances delayed feedback capacity. This bound is a numerical evaluation of Theorem \ref{theorem:Q_UB_delay} with the same $Q$-graph that is used for the proof of Theorem \ref{theorem: DEC_UB}. Accordingly, for $p=0.5$, the plot provides a numerical evaluation of the analytical upper bound in Theorem \ref{theorem: DEC_UB}. In \cite{Huleihel_Sabag_DB}, the authors derived an upper bound on the feedforward capacity, which turned out to be exactly equal to the feedback capacity. This fact led to the question of whether the feedback capacity is equal to the feedforward capacity, which, indeed, is negligibly different from the first-order Markov achievable rate. Following our numerical upper bound, we could see that this bound improves the feedback capacity for any $p\in(0,1)$, which indicates that feedback increases the capacity of the DEC over the entire region of the erasure parameter.

\section{Conclusions}\label{sec:conclusion}
In this paper, we investigated the delayed feedback capacity of FSCs. It was shown that the capacity of a FSC with delayed feedback can be computed as that of a new FSC with instantaneous feedback. Accordingly, several graph-based methods to obtain computable bounds on the capacity of unifilar FSCs, which were introduced for the case of instantaneous feedback, could be adapted to the case of delayed feedback as well. Using these bounds, we could establish that the capacity of the trapdoor channel with delayed feedback of two time instances is equal to $\log_2\left(\frac{3}{2}\right)$. In addition, we derived an upper bound on the delayed feedback capacity of the input-constrained BSC, which also serves as a novel upper bound on its feedforward capacity. Finally, we demonstrate that feedback increases capacity for the DEC.

\begin{appendices}
\section{Trapdoor Channel --- Proof of Capacity (Theorem \ref{theorem: Trapdoor})}\label{app: trapdoor}
\begin{proof}
The proof of the capacity result in Theorem \ref{theorem: Trapdoor} consists of two parts. In Section \ref{Trapdoor_UB}, we prove the converse, that is, $\mathrm{C}^{\mathrm{fb}}_\mathrm{2}\leq  \log_2\left(\frac{3}{2}\right)$, and in Section \ref{Trapdoor_LB}, we show a corresponding lower bound.

The proof is based on the methodology in Section \ref{sec: methodologies}. As clarified, the upper and lower bounds hold for instantaneous feedback, but using the formulation in Section \ref{sec: delayed_as_oneunit}, we are able to transform the delayed feedback capacity into a capacity problem with instantaneous feedback. 

We begin with presenting the formulation of the trapdoor channel with delayed feedback as a new unifilar FSC with instantaneous feedback. The new FSC (see Fig. \ref{fig:TrapdoorChannel_delay}) is defined as follows: the channel state consists of the pair of the previous channel state and channel input, that is, $\tilde{s}_{t-1}\triangleq(s_{t-2},x_{t-1})$. The channel input is $\tilde{x}_t=x_t$ and the channel output is $\tilde{y}_t=y_{t-1}$. If $\tilde{s}_{t-1}=(0,1)$ or $\tilde{s}_{t-1}=(1,0)$, then we have a BSC($0.5$). Otherwise, for any $\tilde{x}_t$, if $\tilde{s}_{t-1}=(0,0)$ then $\tilde{y}_t=0$, and if $\tilde{s}_{t-1}=(1,1)$ then $\tilde{y}_t=1$.

\subsection{Upper Bound}\label{Trapdoor_UB}
Here, we will show that $\mathrm{C}^{\mathrm{fb}}_\mathrm{2}\leq \log_2\left(\frac{3}{2}\right)$. The proof is based on fixing a particular graph-based test distribution, and then solving the MDP problem of the dual capacity upper bound (for additional details, see Section  \ref{sec: method2_DB}). The MDP formulation is presented in Table \ref{table: main1}.

Consider the $Q$-graph in Fig. \ref{fig:1Markov}, which consists of two nodes, and the following graph-based test distribution:
\begin{align}
    T(\tilde{y}=0|\underline{q})=\left[\frac{2}{3},\frac{1}{3}\right].
\end{align}
To present the solution for the Bellman equation, define the constant
\begin{align}\label{eq: rho_trapdoor}
    \rho^*&=\log_2\left(\frac{3}{2}\right),
\end{align}
and the value function
\begin{align}\label{eq: value_trapdoor}
    h(\tilde{s},q)=\begin{cases} 1, & (\tilde{s}=(0,0),q=2)\textit{ or }(\tilde{s}=(1,1),q=1), \\
                  0, & \textit{else}.
    \end{cases}
\end{align}
\begin{remark}
The conjectured solution $(\rho^*,h(\cdot))$ has been obtained by using the value iteration algorithm with the MDP defined in Table \ref{table: main1}. Specifically, applying the value iteration algorithm provides the optimal policy for any possible MDP state. Then, it is only left to solve a finite set of linear equations in order to derive closed-form expressions for $\rho^*$ and $h(\cdot)$.  
\end{remark}
We proceed to show that $\rho^*$ in \eqref{eq: rho_trapdoor} and the value function in \eqref{eq: value_trapdoor} solve the Bellman equation. This directly implies, by Theorem \ref{theorem: DUB_Q}, that $\mathrm{C}^{\mathrm{fb}}_\mathrm{2}\leq \rho^* = \log_2(3/2)$. 
For the MDP state $(\tilde{s}=(0,0),q=1)$, the right-hand side of the Bellman equation is a maximum over $\tilde{x}$ of
\begin{align}
    &D\left(P_{\tilde{Y}|\tilde{X},\tilde{S}}(\cdot|\tilde{x},\tilde{s})\middle\|T_{\tilde{Y}|Q}(\cdot|q)\right)\nn\\&+\sum_{\tilde{y}\in\mathcal{Y}} P(\tilde{y}|\tilde{x},\tilde{s})h\pr{\tilde{f}(\tilde{s},\tilde{x},\tilde{y}),\phi(q,\tilde{y})}\nn\\
    &=\begin{cases} D\left(\left[1,0\right]\big{\|}\left[\frac{2}{3},\frac{1}{3}\right] \right) + h((0,0),1), & \textit{if }\tilde{x}=0, \\
    D\left(\left[1,0\right]\big{\|}\left[\frac{2}{3},\frac{1}{3}\right] \right) + h((0,1),1), & \textit{if } \tilde{x}=1.
    \end{cases}
\end{align}
In both cases of $\tilde{x}$, the equation is simplified to $\log_2\left(\frac{3}{2}\right)$, while the left-hand side of the Bellman equation is $\rho^* + h((0,0),1)$, which is equal to $\log_2\left(\frac{3}{2}\right)$ as well. Therefore, we can conclude that the Bellman equation holds for $(\tilde{s}=(0,0),q=1)$. 

For the MDP state $(\tilde{s}=(0,0), q=2)$, the right-hand side of the Bellman equation is a maximum over $\tilde{x}$ of
\begin{align}
    &D\left(P_{\tilde{Y}|\tilde{X},\tilde{S}}(\cdot|\tilde{x},\tilde{s})\middle\|T_{\tilde{Y}|Q}(\cdot|q)\right)\nn\\&+\sum_{\tilde{y}\in\mathcal{Y}} P(\tilde{y}|\tilde{x},\tilde{s})h\pr{\tilde{f}(\tilde{s},\tilde{x},\tilde{y}),\phi(q,\tilde{y})}\nn\\
    &=\begin{cases} D\left(\left[1,0\right]\big{\|}\left[\frac{1}{3},\frac{2}{3}\right] \right) + h((0,0),1), & \textit{if }\tilde{x}=0, \\
    D\left(\left[1,0\right]\big{\|}\left[\frac{1}{3},\frac{2}{3}\right] \right) + h((0,1),1), & \textit{if } \tilde{x}=1.
    \end{cases}
\end{align}
Also here, in both cases of $\tilde{x}$, the equation is equal to $\log_2\left(3\right)$, while the left-hand side of the Bellman equation is $\rho^* + h((0,0),2) = \log_2(3)$. Thus, the Bellman equation holds for this case too. The verification for the remaining MDP states can be done similarly.

\subsection{Lower Bound}\label{Trapdoor_LB}
The lower bound is derived using Theorem \ref{theorem:bcjr_general_delay} with a particular graph-based encoder that induces the BCJR-invariant property. We show that the achievable rate induced by the graph-based encoder is $R = \log_2\left(\frac{3}{2}\right)$, and therefore $C\geq \log_2\left(\frac{3}{2}\right)$.

A graph-based encoder consists of a $Q$-graph and an input distribution $P_{X|S,Q}$ that is BCJR-invariant. We choose a $Q$-graph consisting of four nodes, and its evolution function is given by the vector representation 
\begin{align}\label{eq:LB_q_vec}
    \underline{\phi}(\underline{q},\tilde{y}=0)&=[1,3,1,3] \nn\\ \underline{\phi}(\underline{q},\tilde{y}=1)&=[2,4,2,4].
\end{align} 
This vector representation implies, for instance, $\phi(q=1,\tilde{y}=0) =1$ and $\phi(q=1,\tilde{y}=1) =2$.
For the $Q$-graph in \eqref{eq:LB_q_vec}, we define the following input distribution:
\begin{align}\label{eq: trapdoor_input_dist}
    &P_{\tilde{X}|\tilde{S},Q}(0|\tilde{s},q) \nn\\&= \scalebox{0.93}{\begin{tabularx}{0.49\textwidth} { 
  |@{}p{1.2cm}@{} 
  | >{\centering\arraybackslash}X 
  | >{\centering\arraybackslash}X 
  | >{\centering\arraybackslash}X 
  | >{\centering\arraybackslash}X | }
 \hline
          & $\tilde{s}=(0,0)$ & $\tilde{s}=(0,1)$ & $\tilde{s}=(1,0)$  & $\tilde{s}=(1,1)$\\
 \hline
$\;\;q=1$  & $2/3$  & $1/3$  & $1/3$ & $0$   \\
\hline
$\;\;q=2$  &  $1$   & $2/3$  & $0$ & $1/3$   \\
\hline
$\;\;q=3$  & $2/3$  & $1$  & $1/3$ & $0$   \\
\hline
$\;\;q=4$  & $1$  & $2/3$  & $2/3$ & $1/3$   \\
\hline
\end{tabularx}}.
\end{align}

According to \eqref{eq: sq_transition}, the Markov transition probability can now be computed as 
\begin{align}
    &P(\tilde{s}^+,q^+|\tilde{s},q)\nn\\
            &= \sum_{\tilde{x},\tilde{y}}P(\tilde{x}|\tilde{s},q)P(\tilde{y}|\tilde{x},\tilde{s})\mathbbm{1}_{\{q^+=\phi(q,\tilde{y})\}}\mathbbm{1}_{{\{\tilde{s}^+=\tilde{f}(\tilde{s},\tilde{x},\tilde{y})\}}}.\nn
\end{align}
Consequently, standard computation of the stationary distribution $\pi(\tilde{s},q)$ provides that
\begin{align*}
   &\pi_{\tilde{S},Q}(\tilde{s},q) \nn\\&= \scalebox{0.93}{\begin{tabularx}{0.49\textwidth} { 
  |@{}p{1.2cm}@{} 
  | >{\centering\arraybackslash}X 
  | >{\centering\arraybackslash}X 
  | >{\centering\arraybackslash}X 
  | >{\centering\arraybackslash}X | }
 \hline
          & $\tilde{s}=(0,0)$ & $\tilde{s}=(0,1)$ & $\tilde{s}=(1,0)$  & $\tilde{s}=(1,1)$\\
 \hline
$\;\;q=1$  & $1/6$  & $1/12$  & $1/36$ & $1/18$   \\
\hline
$\;\;q=2$  & $1/36$  & $1/18$  & $0$ & $1/12$   \\
\hline
$\;\;q=3$  & $1/12$  & $0$  & $1/18$ & $1/36$   \\
\hline
$\;\;q=4$  & $1/18$  & $1/36$  & $1/12$ & $1/6$   \\
\hline
\end{tabularx}}.
\end{align*}

We now verify that the proposed graph-based encoder satisfies the BCJR-invariant property in \eqref{eq: BCJR}. Let us show this explicitly for the case where $(q,\tilde{y})=(1,1)$ and $\tilde{s}^+=(0,0)$. Since $\phi(1,1)=2$, the left-hand side of Eq. \eqref{eq: BCJR} is equal to $\pi_{\tilde{S}|Q}((0,0)|2)$, while the right-hand side is equal to
\begin{align}\label{eq: BCJR_trapdoor}
    &\frac{\sum_{x,s}\pi_{\tilde{S}|Q}(s|1)P_{\tilde{X}|\tilde{S},Q}(x|s,1)P_{\tilde{Y}|\tilde{X},\tilde{S}}(1|x,s)\mathbbm{1}_{\{(0,0)=\tilde{f}(x,1,s)\}}}{\sum_{x',s'}\pi_{\tilde{S}|Q}(s'|1)P_{\tilde{X}|\tilde{S},Q}(x'|s',1)P_{\tilde{Y}|\tilde{X},\tilde{S}}(1|x',s')}\nn\\
    &= \frac{1}{6},\nn
\end{align}
which, indeed, is equal to $\pi_{\tilde{S}|Q}((0,0)|2)$, as required. The verification of the other cases can be done similarly.

Finally, the achievable rate of the graph-based encoder is
\begin{align*}
    R &= I(\tilde{S};\tilde{Y}|Q) \\
      &= \sum_{q\in\mathcal{Q}}\pi_Q(q)\cdot I(\tilde{S};\tilde{Y}|Q=q) \\
      &= \sum_{q\in\mathcal{Q}}\pi_Q(q)\cdot\left[H_2\left(\tilde{Y}|Q=q\right)-H_2(\tilde{Y}|\tilde{S},Q=q)\right]\\
      &\stackrel{(a)}= \sum_{q\in\mathcal{Q}}\pi_Q(q)\cdot\left[H_2\left(\frac{2}{3}\right)-H_2(\tilde{Y}|\tilde{S},Q=q)\right]\\
      &= H_2\left(\frac{2}{3}\right)-\sum_{q\in\mathcal{Q}}\pi_Q(q)\cdot H_2(\tilde{Y}|\tilde{S},Q=q)\\
      &= H_2\left(\frac{2}{3}\right)-\frac{1}{3}\\
      &= \log_2\left(\frac{3}{2}\right),
\end{align*}
where $(a)$ follows due to the fact that
\begin{align}
    P_{\tilde{Y}|Q}(0|q) &= \sum_{\tilde{x},\tilde{s}} \pi(\tilde{s}|q)P(\tilde{x}|\tilde{s},q)P_{\tilde{Y}|\tilde{X}\tilde{S}}(0|\tilde{x},\tilde{s}) \nn\\
    &= \begin{cases} 2/3, & q=1\textit{ or }q=3, \\
                  1/3, & q=2\textit{ or }q=4.\nn
    \end{cases}
\end{align}
\end{proof}

\section{Input-Constrained BSC --- Proof of Theorem \ref{theorem: BSC_UB}}\label{app: BSC}
\begin{proof}
Here we provide the proof of Theorem \ref{theorem: BSC_UB} regarding the upper bound on the capacity of the $(1,\infty)$-input constrained BSC($p$). We begin with the formulation of the channel with delayed feedback of two time instances as a new unifilar FSC with instantaneous feedback. The channel state is defined as $\tilde{s}_{t-1}\triangleq(x_{t-2},x_{t-1})$, the channel input is $\tilde{x}_t=x_t$, and the channel output is $\tilde{y}_t=y_{t-1}$. If $\tilde{s}_{t-1}=(0,0)$ or $\tilde{s}_{t-1}=(1,0)$, then $\tilde{y}_t=0$ with probability $1-p$ or $\tilde{y}_t=1$ with probability $p$. Otherwise, if $\tilde{s}_{t-1}=(0,1)$, then $\tilde{y}_t=0$ with probability $p$ or $\tilde{y}_t=1$ with probability $1-p$. Due to the input constraint, if $\tilde{s}_{t-1}=(0,1)$, then the transmitted input $\tilde{x}_t$ must be zero.

For a particular graph-based test distribution, we solve the MDP problem of the dual capacity upper bound. Here too, consider the $Q$-graph in Eq. \eqref{eq:LB_q_vec}, and the following parameterized graph-based test distribution:
\begin{align}
    T(\tilde{y}=0|\underline{q})=\left[a,b,c,d\right],
\end{align}
where $(a,b,c,d)\in(0,1)^4$. Define the constant
\begin{align}\label{eq: rho_bsc}
    \rho^*&=\log_2\left(\frac{p^p\bar{p}^{\bar{p}}a^{(p^3-3p^2+3p-1)}(\bar{b}\bar{c}d)^{(p^3-p^2)}}{(\bar{a}bc)^{(p^3-2p^2+p)}\bar{d}^{p^3}}\right).
\end{align}
Further, define the value function $h(\tilde{s},q)$ as follows:
\begin{align}\label{eq: value_BSC}
    &h((0,0),1)= h((1,0),1)\nn\\&=\log_2\left(\frac{\bar{c}^pd\bar{d}^pc^{1-2p}}{\bar{a}^{2p}b^pa^{2-3p}}\cdot\left(\frac{\bar{a}bc\bar{d}}{a\bar{b}\bar{c}d}\right)^{p^2}\right)\nn\\
    &h((0,0),2)= h((1,0),2)=\log_2\left(\frac{d}{b}\cdot\left(\frac{b\bar{d}}{\bar{b}d}\right)^p\right)\nn\\
    &h((0,0),3)= h((1,0),3)=\log_2\left(\frac{a^{2p-1}d\bar{d}^p}{(\bar{a}bc)^p}\cdot\left(\frac{\bar{a}bc\bar{d}}{a\bar{b}\bar{c}d}\right)^{p^2}\right)\nn\\
    &h((0,0),4)= h((1,0),4)=0\nn\\
    &h((0,1),1)\nn\\&=\log_2\left(\frac{a^{6p^2-6p+1}\bar{b}^{2p^2-p}\bar{c}^{2p^2}d^{2p^2-p+1}\bar{d}^p}{\bar{a}^{4p^2-2p+1}b^{4p^2-3p+1}c^{4p^2-2p}}\cdot\left(\frac{\bar{a}bc\bar{d}}{a\bar{b}\bar{c}d}\right)^{2p^3}\right)\nn\\
    &h((0,1),2)\nn\\&=\log_2\left(\frac{a^{5p^2-4p+1}(\bar{a}cd)^{p}(\bar{b}\bar{c}d\bar{d})^{p^2}}{\bar{b}^{1-p}(\bar{a}bc)^{3p^2}}\cdot\left(\frac{\bar{a}bc\bar{d}}{a\bar{b}\bar{c}d}\right)^{2p^3}\right)\nn\\
    &h((0,1),3)\nn\\&=\log_2\left(\frac{a^{6p^2-5p+1}\bar{b}^{2p^2-p}\bar{c}^{2p^2+p-1}d^{2p^2-p+1}\bar{d}^p(\bar{a}bc\bar{d})^{2p^3}}{\bar{a}^{4p^2-p}b^{4p^2-3p+1}c^{4p^2-p}(a\bar{b}\bar{c}d)^{2p^3}}\right)\nn\\
    &h((0,1),4)=\log_2\left(\frac{a^{5p^2-4p+1}(\bar{b}\bar{c}d\bar{d})^{p^2}}{(\bar{a}bc)^{3p^2-p}\bar{d}^{1-p}}\cdot\left(\frac{\bar{a}bc\bar{d}}{a\bar{b}\bar{c}d}\right)^{2p^3}\right).
\end{align}

\begin{table*}[b]
\begin{align*}
    T_{Y|Q}(y|q)=
   \scalebox{1}{\begin{tabularx}{0.8\textwidth} { 
  |@{}p{1.3cm}@{} 
  | >{\centering\arraybackslash}X 
  | >{\centering\arraybackslash}X 
  | >{\centering\arraybackslash}X 
  | >{\centering\arraybackslash}X
  | >{\centering\arraybackslash}X
  | >{\centering\arraybackslash}X
  | >{\centering\arraybackslash}X
  | >{\centering\arraybackslash}X | }
 \hline
 & $q=1$ & $q=2$ & $q=3$  & $q=4$  & $q=5$  & $q=6$  & $q=7$  & $q=8$\\
 \hline
$\;\;\;y=-1$  & $0$  & $\gamma_2/2$  & $a/2$ & $a/2$ & $\gamma_2/2$ & $0.5-\gamma_1$ & $\gamma_2/2$ & $a/2$   \\
\hline
$\;\;\;y=0$  & $\gamma_1$  & $0.5-\gamma_2$  & $0.5-a$ & $0.5-a$ & $0.5-\gamma_2$ & $\gamma_1$ & $0.5-\gamma_2$ & $0.5-a$   \\
\hline
$\;\;\;y=1$  & $0.5-\gamma_1$  & $\gamma_2/2$  & $a/2$ & $a/2$ & $\gamma_2/2$ & $0$ & $\gamma_2/2$ & $a/2$   \\
\hline
$\;\;\;y=?$  & $0.5$  & $0.5$  & $0.5$ & $0.5$ & $0.5$ & $0.5$ & $0.5$ & $0.5$   \\
\hline
\end{tabularx}}.
\end{align*}
\caption{Graph-based test distribution for the DEC.}
\label{table: test_dist}
\end{table*}
To complete the proof it is left to show that, under the constraints given in \eqref{eq: bsc_cons}, the scalar $\rho^*$ in \eqref{eq: rho_bsc} and the value function in \eqref{eq: value_BSC} solve the Bellman equation. For the MDP state $(\tilde{s}=(0,0),q=1)$, the right-hand side of the Bellman equation is a maximum over $\tilde{x}$ of
\begin{align}\label{eq: BSC_bellman_rs}
    &D\left(P_{\tilde{Y}|\tilde{X},\tilde{S}}(\cdot|\tilde{x},\tilde{s})\middle\|T_{\tilde{Y}|Q}(\cdot|q)\right)\nn\\&+\sum_{\tilde{y}\in\mathcal{Y}} P(\tilde{y}|\tilde{x},\tilde{s})h\pr{\tilde{f}(\tilde{s},\tilde{x},\tilde{y}),\phi(q,\tilde{y})}\nn\\
    &=D\left(\left[\bar{p},p\right]\big{\|}\left[a,\bar{a}\right] \right)\nn\\ 
    &+\begin{cases} 
    \bar{p}\cdot h((0,0),1)+p\cdot h((0,0),2), & \textit{if } \tilde{x}=0\\
    \bar{p}\cdot h((0,1),1)+p \cdot h((0,1),2), & \textit{if } \tilde{x}=1.
    \end{cases}
\end{align}

Under the constraints given in \eqref{eq: bsc_cons}, it can be verified that $\tilde{x}=0$ attains the maximum in \eqref{eq: BSC_bellman_rs}. Further, the left-hand side of the Bellman equation is $\rho^*+h(\left(0,0),1\right)$, which, after being simplified, is exactly equal to the right-hand side of the Bellman equation. Hence, the Bellman equation holds for the case that $(\tilde{s}=(0,0),q=1)$. The verification for the remaining MDP states is omitted here and follows similar calculations.
\end{proof}

\section{DEC --- Proof of Theorem \ref{theorem: DEC_UB}}\label{app: DEC}
\begin{proof}
Here we provide the proof of Theorem \ref{theorem: DEC_UB} regarding the upper bound on the capacity of the DEC. As before, we start with the formulation of the channel with delayed feedback of two time instances as a new unifilar FSC with instantaneous feedback. The channel state is defined as $\tilde{s}_{t-1}\triangleq(x_{t-2},x_{t-1})$, the channel input is $\tilde{x}_t=x_t$, and the channel output is $\tilde{y}_t=y_{t-1}$. The output of the DEC is $\tilde{y}_t=x_{t-1}-x_{t-2}$ with probability $1-p$, or $\tilde{y}_t=?$ with probability $p$, where $p \in [0,1]$ is the channel parameter. 

Also here, for a particular graph-based test distribution, we solve the MDP problem of the dual capacity upper bound. Specifically, consider the following $Q$-graph:
\begin{align}\label{eq:UB_q_vec_DEC}
    \underline{\phi}(\underline{q},\tilde{y}=-1)&=[1,1,1,1,1,1,1,1] \nn\\ \underline{\phi}(\underline{q},\tilde{y}=0)&=[1,3,3,4,4,6,8,8] \nn\\
    \underline{\phi}(\underline{q},\tilde{y}=1)&=[6,6,6,6,6,6,6,6] \nn\\
    \underline{\phi}(\underline{q},\tilde{y}=?)&=[2,7,7,7,7,5,7,7].
\end{align}
For $a,\gamma_1,\gamma_2\in(0,0.5)$ and the $Q$-graph in \eqref{eq:UB_q_vec_DEC}, consider the graph-based test distribution $T_{Y|Q}(y|q)$ that is given by Table \ref{table: test_dist}.
The proposed graph-based test distribution follows by first numerically optimizing over the test distribution. Then, we observed that the optimal test distribution can be represented by three parameters, which are denoted here as $\gamma_1,\gamma_2$, and $a$. 

Define the constant
\begin{align}\label{eq: rho_DEC}
    &\rho^*=\frac{1}{4}\log_2\left(\frac{(1-2a)^{-1}(2-3a)}{\left(1+8a^2\bar{a}-3a-(1-4a\bar{a})\sqrt{1+4a^3}\right)}\right).
\end{align}
Also, define $h(\tilde{s},q)$ as follows:
\begin{align}\label{eq: value_DEC}
    &h((0,0),1)= h((0,0),6)=\frac{1}{4}\log_2\left(\frac{a(1-2a)\gamma_2}{4(1-2\gamma_2)\gamma_1^2}\right)\nn\\
    &h((0,0),2)= h((0,0),5)= h((0,0),7)\nn\\&=\frac{1}{4}\log_2\left(\frac{4a\gamma_1^2\gamma_2}{(1-2a)(1-2\gamma_2)^3}\right)\nn\\
    &h((0,0),3)= h((0,0),4)= h((0,0),8)\nn\\&=\frac{1}{4}\log_2\left(\frac{4a\gamma_1^2\gamma_2}{(1-2a)^3(1-2\gamma_2)}\right)\nn\\
    &h((0,1),1)= \frac{1}{4}\log_2\left(\frac{a(1-2a^2)}{(1-2\gamma_1)^3}\right)\nn\\
    &h((0,1),2)= h((0,1),5)= h((0,1),7)\nn\\&=\frac{1}{4}\log_2\left(\frac{a(1-2a)^2}{\gamma_2^2(1-2\gamma_1)}\right)\nn\\
    &h((0,1),3)= h((0,1),4)= h((0,1),8)\nn\\&=\frac{1}{4}\log_2\left(\frac{(1-2a)^2}{a(1-2\gamma_1)}\right)\nn\\
    &h((0,1),6)= \frac{1}{4}\log_2\left(\frac{a(1-2a)^2}{1-2\gamma_1}\right)\nn\\
    &h((1,0),1)= \frac{1}{4}\log_2\left(\frac{a\gamma_2(1-2a)}{1-2\gamma_2}\right)\nn\\
    &h((1,0),2)= h((1,0),5)= h((1,0),7)\nn\\&=\frac{1}{4}\log_2\left(\frac{a(1-2a)}{\gamma_2(1-2\gamma_2)}\right)\nn\\
    &h((1,0),3)= h((1,0),4)= h((1,0),8)\nn\\&=\frac{1}{4}\log_2\left(\frac{a(1-2\gamma_2)}{\gamma_2(1-2a)}\right)\nn\\
    &h((1,0),6)= \frac{1}{4}\log_2\left(\frac{a\gamma_2(1-2a)}{(1-2\gamma_1)^2(1-2\gamma_2)}\right)\nn\\
    &h((1,1),1)=\log_2\left(\frac{1-2a}{2\gamma_1}\right)\nn\\
    &h((1,1),2)= h((1,1),5)= h((1,1),7)= \frac{1}{2}\log_2\left(\frac{1-2a}{1-2\gamma_2}\right)\nn\\
    &h((1,1),3)= h((1,1),4)= h((1,1),8)= 0 \nn\\
    &h((1,1),6)= \frac{1}{4}\log_2\left(\frac{a(1-2a)^2}{4\gamma_1^2(1-2\gamma_1)}\right).
\end{align}
Let us assume that the optimal policy is given by 
\begin{align}\label{eq: policy_DEC}
    u^*(\tilde{s},q)=\begin{cases} 1, & (\tilde{s}=(1,1),q=1), \\
                  0, & \textit{else}.
    \end{cases}
\end{align}
The policy above was obtained by solving the MDP problem numerically using the value iteration algorithm. Assuming \eqref{eq: policy_DEC}, it can be noted that the Bellman equation is based on a finite set of linear equations, and it can be verified that if
\begin{align}\label{eq:param_dec}
    \gamma_1 &= \frac{1}{4a}\left((2-4a)\cdot\sqrt{a^2+0.25}+4a\bar{a}-1\right)\nn\\
    \gamma_2 &= \frac{1}{4-6a}\left((4a^2-4a+1)\sqrt{1+4a^2}-8a^2\bar{a}+1\right),
\end{align}
then the Bellman equation holds under our choice of $\rho^*$ in \eqref{eq: rho_DEC} and the function $h(\tilde{s},q)$ in \eqref{eq: value_DEC}. The verification follows from straightforward calculations, as we did in the previous sections, and therefore the details are omitted here. In Eq. \eqref{eq:param_dec}, we write analytical expressions for $\gamma_1, \gamma_2$ as a function of $a$. These expressions were derived by observing that, for particular MDP states, the optimal solution (in terms of the test distribution's parameters) is achieved when the right-hand side of the Bellman equation does not depend on the action. Namely, for particular MDP states we require that the right-hand side of the Bellman equation is equal for $u=0$ and $u=1$. Such a requirement results in linear equality constraints that are satisfied with $\gamma_1$ and $\gamma_2$ in \eqref{eq:param_dec}.

\end{proof}

\end{appendices}
\bibliography{ref}
\bibliographystyle{IEEEtran}
\newpage
\begin{IEEEbiographynophoto}{Bashar Huleihel}
(Student Member, IEEE) received the B.Sc. and M.Sc. degrees in electrical and computer engineering from the Ben-Gurion University of the Negev, Israel, in 2017 and 2020, respectively. He is currently pursuing the Ph.D. degree in electrical and computer engineering at the same institution. His research interests include information theory and machine learning.
\end{IEEEbiographynophoto}

\begin{IEEEbiographynophoto}{Oron Sabag}
(Member, IEEE) received his B.Sc. (cum laude), the M.Sc. (summa cum laude) and the Ph.D. in Electrical and Computer Engineering from the Ben-Gurion University of the Negev, Israel, in 2013, 2016 and 2019, respectively. From 2019 to 2022, he was a postdoctoral fellow with the Department of Electrical Engineering at Caltech. He is currently a senior lecturer at the Benin School of Computer Science and Engineering at The Hebrew University of Jerusalem. His research interests include control, information theory and reinforcement learning. He was a recipient of several awards, among them are ISEF postdoctoral fellowship, ISIT-2017 best student paper award, SPCOM-2016 best student paper award, and the Feder Family Award for outstanding research in communications.
\end{IEEEbiographynophoto}

\begin{IEEEbiographynophoto}{Haim Permuter}
(Senior Member, IEEE) received the B.Sc. and M.Sc. degrees (summa cum laude) in electrical and computer engineering from Ben-Gurion University of the Negev, Israel, in 1997 and 2003, respectively, and the Ph.D. degree in electrical engineering from Stanford University, Stanford, CA, USA, in 2008. From 1997 to 2004, he was an Officer with the Research and Development Unit of the Israeli Defense Forces. Since 2009, he has been with the Department of Electrical and Computer Engineering, Ben-Gurion University of the Negev, where he is currently a Professor and the Luck-Hille Chair of electrical engineering. He also serves as the Head of the communication track in his department. He was a recipient of several awards, among them the Fullbright Fellowship, the Stanford Graduate Fellowship (SGF), the Allon Fellowship, and the U.S.–Israel Binational Science Foundation Bergmann Memorial Award. He has served on the editorial boards for the IEEE TRANSACTIONS ON INFORMATION THEORY from 2013 to 2016.
\end{IEEEbiographynophoto}

\begin{IEEEbiographynophoto}{Victoria Kostina}(Senior Member, IEEE)
is a professor of electrical engineering and of computing and mathematical sciences at Caltech. She received the bachelor's degree from Moscow Institute of Physics and Technology (MIPT) in 2004, the master's degree from University of Ottawa in 2006, and the Ph.D. degree from Princeton University in 2013.  During her studies at MIPT, she was affiliated with the Institute for Information Transmission Problems of the Russian Academy of Sciences. 
 
Her research program spans finite delay theory of information, random access communications, control over communication channels, and information bottlenecks in computing systems. She has served as a Guest Editor for the IEEE Journal on Selected Areas in Information Theory and for Entropy.  She received the Natural Sciences and Engineering Research Council of Canada postgraduate scholarship during 2009--2012, Princeton Electrical Engineering Best Dissertation Award in 2013, Simons-Berkeley research fellowship in 2015 and the NSF CAREER award in 2017.  
\end{IEEEbiographynophoto}

\end{document}